\pgfplotsset{compat=1.17}
\pgfplotsset{compat = 1.15, cycle list/Set1-8} 
\pgfplotsset{compat=1.18}
\newcommand\resetstackedplots{
\makeatletter
\pgfplots@stacked@isfirstplottrue
\makeatother
\addplot [forget plot,draw=none] coordinates{(1,0) (2,0) (3, 0) (4, 0) (5, 0) (6, 0) (7, 0) (8, 0) (9, 0) };
}
\newcommand\resetstackedplotsb{
\makeatletter
\pgfplots@stacked@isfirstplottrue
\makeatother
\addplot [forget plot,draw=none] coordinates{(10000, 0.0001) (20000, 0.0001) (30000, 0.0001) (40000, 0.0001) (50000, 0.0001) (60000, 0.0001) (70000, 0.0001) (80000, 0.0001) (90000, 0.0001) (100000, 0.0001) (200000, 0.0001) (300000, 0.0001) (400000, 0.0001) (500000, 0.0001) (600000, 0.0001) (700000, 0.0001) (800000, 0.0001) (900000, 0.0001) (1000000, 0.0001) (2000000, 0.0001)};
}
\definecolor{bgmm}{RGB}{237, 225, 97}
\definecolor{dgmm}{RGB}{199, 177, 36}
\definecolor{bhaca}{RGB}{247, 118, 109}
\definecolor{dhaca}{RGB}{214, 62, 51}
\definecolor{bkl15}{RGB}{78, 179, 64}
\definecolor{dkl15}{RGB}{56, 156, 42}
\definecolor{bkl10}{RGB}{140, 222, 129}
\definecolor{dkl10}{RGB}{101, 184, 90}
\definecolor{bkl5}{RGB}{186, 242, 179}
\definecolor{dkl5}{RGB}{168, 232, 160}
\definecolor{btmm}{RGB}{237, 140, 216}
\definecolor{dtmm}{RGB}{161, 72, 142}
\definecolor{baca}{RGB}{146, 171, 224}
\definecolor{daca}{RGB}{74, 108, 181}
\def\eps{\varepsilon}
\def\bR{{\mathbb R}}
\def\bN{{\mathbb N}}
\def\bX{{\mathbb X}}
\def\cD{\mathcal{D}}
\def\km{{\mathrm{km}}}
\def\RSpace{{\pazocal{R}}}
\DeclareMathAlphabet{\pazocal}{OMS}{zplm}{m}{n}
\renewcommand{\emph}[1]{\textit{\textbf{#1}}}
\newcommand{\disk}[2][\Delta]{
    \ifthenelse{\isempty{#2}}
    {\mathrm{b}_{#1}}
    {\mathrm{b}_{#1}(#2)}
}
\newcommand{\mdisk}[3][\Delta]{
    \ifthenelse{\isempty{#3}}
    {\mathrm{b}_{#1}^{#2}}
    {\mathrm{b}_{#1}^{#2}(#3)}
}
\newcommand{\dfree}[3][\Delta]{%
    \ifthenelse{\equal{#2}{} }
    {\cD_{#1}}
    {\cD^{(#2,#3)}_{#1}}
}
\newcommand{\df}{\mathrm{d}_\mathcal{F}}
\newcommand{\Cov}{\mathrm{Cov}}
\title{Finding Complex Patterns in Trajectory Data via Geometric Set Cover\footnote{Compared to the first version of this manuscript that appeared on arXiv, this is an updated version with more extensive experiments.}}
\titlerunning{Finding Complex Patterns in Trajectory Data}
\author{Jacobus Conradi}{University of Bonn, Germany}{conradi@cs.uni-bonn.de}{https://orcid.org/0000-0002-8259-1187}{Partially funded by the Deutsche Forschungsgemeinschaft (DFG, German Research Foundation) - AA 1111/2-2 (FOR 2535 Anticipating Human Behavior) and the iBehave Network: Sponsored by the Ministry of Culture and Science of the State of North Rhine-Westphalia.}
\author{Anne Driemel}{University of Bonn, Germany}{driemel@cs.uni-bonn.de}{https://orcid.org/0000-0002-1943-2589}{Affiliated with Lamarr Institute for Machine Learning and Artificial Intelligence.}
\authorrunning{J. Conradi and A. Driemel}
\keywords{Clustering, Set cover, Fr\'echet distance, Approximation algorithms}
\begin{document}
\nolinenumbers
\maketitle

\newcommandx*\drawChart[4][1=linear, 4=0.05cm]{
    \begin{tikzpicture}
    \begin{axis}[
            width=12cm,
            height=8cm,
            ybar=2*\pgflinewidth,
            bar width=#4,
            ymin=0,
            ylabel={#2},
            xlabel={TAG},
            log origin=infty,
            ymode= #1,
            xtick={1,2,...,14},
            xticklabels={1,2,3,4,5,6,7,8,9,10,11,12,13,14},
            legend columns=4,
            column sep=1ex,
            ymax=1.19,
            enlarge x limits=0.07,
            grid=major,
            title={#2 Comparison}
            ]
            
            \addplot[
                ybar,
                fill=bkl5,
                draw=dkl5
            ] table[x=TAG,y=KlCluster 5,col sep=comma] {#3};
            \addlegendentry{KlCluster 5}
    
            \addplot[
                ybar,
                fill=bkl10,
                draw=dkl10
            ] table[x=TAG,y=KlCluster 10,col sep=comma] {#3};
            \addlegendentry{KlCluster 10}
    
            \addplot[
                ybar,
                fill=bkl15,
                draw=dkl15
            ] table[x=TAG,y=KlCluster 15,col sep=comma] {#3};
            \addlegendentry{KlCluster 15}
    
            \addplot[
                ybar,
                fill=btmm,
                draw=dtmm
            ] table[x=TAG,y=Tmm,col sep=comma] {#3};
            \addlegendentry{Tmm}
    
            \addplot[
                ybar,
                fill=baca,
                draw=daca
            ] table[x=TAG,y=Aca,col sep=comma] {#3};
            \addlegendentry{Aca}
    
            \addplot[
                ybar,
                fill=bhaca,
                draw=dhaca
            ] table[x=TAG,y=Haca,col sep=comma] {#3};
            \addlegendentry{Haca}
    
            \addplot[
                ybar,
                fill=bgmm,
                draw=dgmm
            ] table[x=TAG,y=Gmm,col sep=comma] {#3};
            \addlegendentry{Gmm}
    
        \end{axis}
    \end{tikzpicture}
}
\newcommandx*\drawChartB[5][1=linear, 4=0.05cm, 5=1.19]{
    \begin{tikzpicture}
    \begin{axis}[
            width=1.6\textwidth,
            height=80mm,
            ybar=2*\pgflinewidth,
            bar width=#4,
            ymin=0.1,
            ylabel={\small #2},
            log origin=infty,
            ymode= #1,
		      ymajorgrids,
            ymax= #5,
            xtick={1,2,...,14},
		      xticklabel style = {font=\footnotesize},
		      xticklabels = {\texttt{1},\texttt{2},\texttt{3},\texttt{4},\texttt{5},\texttt{6},\texttt{7},\texttt{8},\texttt{9},\texttt{10},\texttt{11},\texttt{12},\texttt{13},\texttt{14}},
		      xtick style = {draw=none},
            yticklabel style = {font=\small},
            legend columns=4,
            column sep=1ex,
            enlarge x limits=0.07,
            ]
            
            \addplot[
                ybar,
                fill=bkl5,
                draw=dkl5
            ] table[x=TAG,y=KlCluster 5,col sep=comma] {#3};
            \addlegendentry{our ($l=5$)}
    
            \addplot[
                ybar,
                fill=bkl10,
                draw=dkl10
            ] table[x=TAG,y=KlCluster 10,col sep=comma] {#3};
            \addlegendentry{our ($l=10$)}
    
            \addplot[
                ybar,
                fill=bkl15,
                draw=dkl15
            ] table[x=TAG,y=KlCluster 15,col sep=comma] {#3};
            \addlegendentry{our ($l=15$)}
    
            \addplot[
                ybar,
                fill=btmm,
                draw=dtmm
            ] table[x=TAG,y=Tmm,col sep=comma] {#3};
            \addlegendentry{TS}
        
            \addplot[
                ybar,
                fill=bhaca,
                draw=dhaca
            ] table[x=TAG,y=Haca,col sep=comma] {#3};
            \addlegendentry{HACA}
            
            \addplot[
                ybar,
                fill=baca,
                draw=daca
            ] table[x=TAG,y=Aca,col sep=comma] {#3};
            \addlegendentry{ACA}
    
            \addplot[
                ybar,
                fill=bgmm,
                draw=dgmm
            ] table[x=TAG,y=Gmm,col sep=comma] {#3};
            \addlegendentry{SC}
    
        \end{axis}
    \end{tikzpicture}
}
\newcommandx*\drawChartStack[5][1=linear, 5=0.05cm]{
    \begin{tikzpicture}
    \begin{axis}[
            width=12cm,
            height=8cm,
            ybar stacked, 
            bar width=#5,
            ymin=0,
            ylabel={#2},
            xlabel={#3},
            log origin=infty,
            ymode= #1,
            xtick={0.2,0.3,0.4,...,1.0},
            xticklabels={0.2,0.3,0.4,0.5,0.6,0.7,0.8,0.9,1.0},
            xmax=1,
            legend style={at={(1,1.05)}, anchor=south east},
            enlarge x limits=0.07,
            grid=major,
            title={#2 Comparison},
            unbounded coords=jump
            ]
            
            \addplot[
                ybar,
                fill=dtmm,
                draw=dtmm,
                bar shift=-3*#5
            ] table[x=Simp Delta,y=SimplifyTime KlCluster 5,col sep=comma] {#4};
    
            \addplot[
                ybar,
                fill=btmm,
                draw=btmm,
                bar shift=-3*#5
            ] table[x=Simp Delta,y=SolvingTime KlCluster 5,col sep=comma] {#4};
            \addlegendentry{KlCluster 5}

             \resetstackedplots

            \addplot[
                ybar,
                fill=dhaca,
                draw=dhaca,
            ] table[x=Simp Delta,y=SimplifyTime KlCluster 10,col sep=comma] {#4};
           \addplot[
                ybar,
                fill=bhaca,
                draw=bhaca
            ] table[x=Simp Delta,y=SolvingTime KlCluster 10,col sep=comma] {#4};
            \addlegendentry{KlCluster 10}

            \resetstackedplots

            \addplot[
                ybar,
                fill=daca,
                draw=daca,
                bar shift=3*#5,
            ] table[x=Simp Delta,y=SimplifyTime KlCluster 15,col sep=comma] {#4};
           \addplot[
                ybar,
                fill=baca,
                draw=baca,
                bar shift=3*#5
            ] table[x=Simp Delta,y=SolvingTime KlCluster 15,col sep=comma] {#4};
            \addlegendentry{KlCluster 15}

        \end{axis}
    \end{tikzpicture}
}

\newcommand{\boxplot}[2]{
\begin{tikzpicture}[]
	\pgfplotstableread[col sep=comma]{#1}\csvdata
 
	\begin{axis}[
		boxplot/draw direction = y,
		axis x line* = bottom,
		axis y line = left,
		enlarge y limits,
        enlarge x limits = 0.05,
		ymajorgrids,
        height=0.5\textwidth,
        width=0.5\textwidth,
		xtick = {1, ..., 7},
		xticklabel style = {anchor=east,rotate=45, font=\footnotesize},
		xticklabels = {our ($l=5$) ,our ($l=10$) ,our ($l=15$) , TS , ACA , HACA , SC },
		xtick style = {draw=none},
		ylabel = {#2},
		ytick = {0.2, 0.4, 0.6, 0.8, 1.0},
        ymax=1,
        ymin=0.1,
        boxplot/box extend=0.3
	]

		\addplot+[boxplot, fill=bkl5, draw=black] table[y expr=\thisrow{KlCluster 5}] {\csvdata}; 
        \addplot+[only marks, mark=x, mark size=1pt, mark options={black, fill opacity=0.5}] table[y expr=\thisrow{KlCluster 5}, x expr=1] {\csvdata};
  
        \addplot+[boxplot, fill=bkl10, draw=black] table[y expr=\thisrow{KlCluster 10}] {\csvdata}; 
        \addplot+[only marks, mark=*, mark size=1pt, mark options={black, fill opacity=0.5}] table[y expr=\thisrow{KlCluster 10}, x expr=2] {\csvdata};

        \addplot+[boxplot, fill=bkl15, draw=black] table[y expr=\thisrow{KlCluster 15}] {\csvdata}; 
        \addplot+[only marks, mark=square, mark size=1pt, mark options={black, fill opacity=0.5}] table[y expr=\thisrow{KlCluster 15}, x expr=3] {\csvdata};
        
        \addplot+[boxplot, fill=btmm, draw=black] table[y expr=\thisrow{Tmm}] {\csvdata}; 
        \addplot+[only marks, mark=triangle, mark size=1pt, mark options={black, fill opacity=0.5}] table[y expr=\thisrow{Tmm}, x expr=4] {\csvdata};

        \addplot+[boxplot, fill=bhaca, draw=black] table[y expr=\thisrow{Haca}] {\csvdata}; 
        \addplot+[only marks, mark=diamond, mark size=1pt, mark options={black, fill opacity=0.5}] table[y expr=\thisrow{Haca}, x expr=6] {\csvdata};
        
        \addplot+[boxplot, fill=baca, draw=black] table[y expr=\thisrow{Aca}] {\csvdata}; 
        \addplot+[only marks, mark=triangle*, mark size=1pt, mark options={black, fill opacity=0.5}] table[y expr=\thisrow{Aca}, x expr=5] {\csvdata};

        \addplot+[boxplot, fill=bgmm, draw=black] table[y expr=\thisrow{Gmm}] {\csvdata}; 
        \addplot+[only marks, mark=star, mark size=1
        pt, mark options={black, fill opacity=0.5}] table[y expr=\thisrow{Gmm}, x expr=7] {\csvdata};

	\end{axis}
\end{tikzpicture}
}







\begin{abstract}
    Clustering trajectories is a central challenge when faced with large amounts of movement data such as GPS data. We study a clustering problem that can be stated as a geometric set cover problem: Given a polygonal curve of complexity $n$, find the smallest number $k$ of representative trajectories of complexity at most $l$ such that any point on the input trajectories lies on a subtrajectory of the input that has Fréchet distance at most $\Delta$ to one of the representative trajectories. 
    In previous work, Brüning et al.~(2022) developed a bicriteria approximation algorithm that returns a set of curves of size $O(kl\log(kl))$ which covers the input with a radius of $11\Delta$ in time $\Tilde{O}((kl)^2n + kln^3)$, where $k$ is the smallest number of curves of complexity $l$ needed to cover the input with a radius of $\Delta$. The representative trajectories computed by this algorithm are always line segments. In the applications however, one is usually interested in more complex representative curves which consist of several edges. We present a new approach that builds upon previous work computing a set of curves of size $O(k\log(n))$ in time $\Tilde{O}(l^2n^4 + kln^4)$ with the same distance guarantee of $11\Delta$, where each curve may consist of curves of complexity up to the given complexity parameter~$l$. We conduct experiments on tracking data of ocean currents and full body motion data suggesting its validity as a tool for analyzing large spatio-temporal data sets.
\end{abstract}

\section{Introduction}
Advancements in motion tracking technology made it possible to observe and track spatio-temporal phenomena from many different areas affected by climate change ranging from ocean currents to animal migration. The authors of \cite{wilson2016climate} observed changes in ocean currents driven by climate change and analysed how these affect the dispersal of marine life providing evidence of the importance of understanding and predicting these current changes. Similar effects can be observed among migratory land-bound animals, as temperature and resource availability changes \cite{kubelka2022animal} and displacement of human life \cite{tabe2019climate}. Practitioners are often confronted with vast amounts of data from which one would like to extract a reoccuring pattern, preferably of small complexity, making the data more accessible to less efficient algorithms or schematic visualization. 
Identifying such patterns is a particular challenge as the type of pattern which is sought after may vary depending on the data and specific application. Since the quality assessment of patterns varies, there are multiple approaches to tackle the problem of subtrajectory clustering ranging from  heuristics~\cite{LeeHW07} to machine learning approaches, such as reinforcement learning~\cite{liang2024sub}, and further to formal approaches utilising combinatorial optimisation and similar algorithmic techniques (see the survey papers ~\cite{BuchinW20, wang2021survey, yuan2017review}). A popular quality measure often used in this context is the Fréchet distance. This is a distance measure defined on the space of curves that contains all trajectories. It was used in the work of Agarwal et al.~\cite{agarwal2018} and that of Buchin et al.~\cite{buchinGroup20}, among others. The approach we want to focus on is that of Akitaya et al.~\cite{Akitaya2021Covering}. They pose the problem as a geometric set cover problem, in which a given trajectory needs to be ``covered'' by the smallest possible number of ``center'' trajectores, such that each point of the input trajectory is contained in a subtrajectory of the input trajectory which has a small Fréchet distance to one of the center trajectories. This can similarly be thought of as a clustering problem in which each point on the input trajectory is assigned to at least one center trajectory. One drawback for practical applications of the approaches presented in \cite{Akitaya2021Covering} and the subsequent work of Brüning et al.~\cite{Brüning2022Faster} is that the center trajectories computed only ever consist of a single edge, wheras in applications one is often interested in finding center trajectories of higher complexity, as even a simple circular motion (as is present in gulf streams for example) cannot be modeled well with a single edge. In this paper we focus on extending the approach of~\cite{Brüning2022Faster} to allow center trajectories of higher complexity. We validate our approach by conducting experiments with real data from two different application areas.

\begin{figure}
    \centering
    \includegraphics[width=0.8\linewidth]{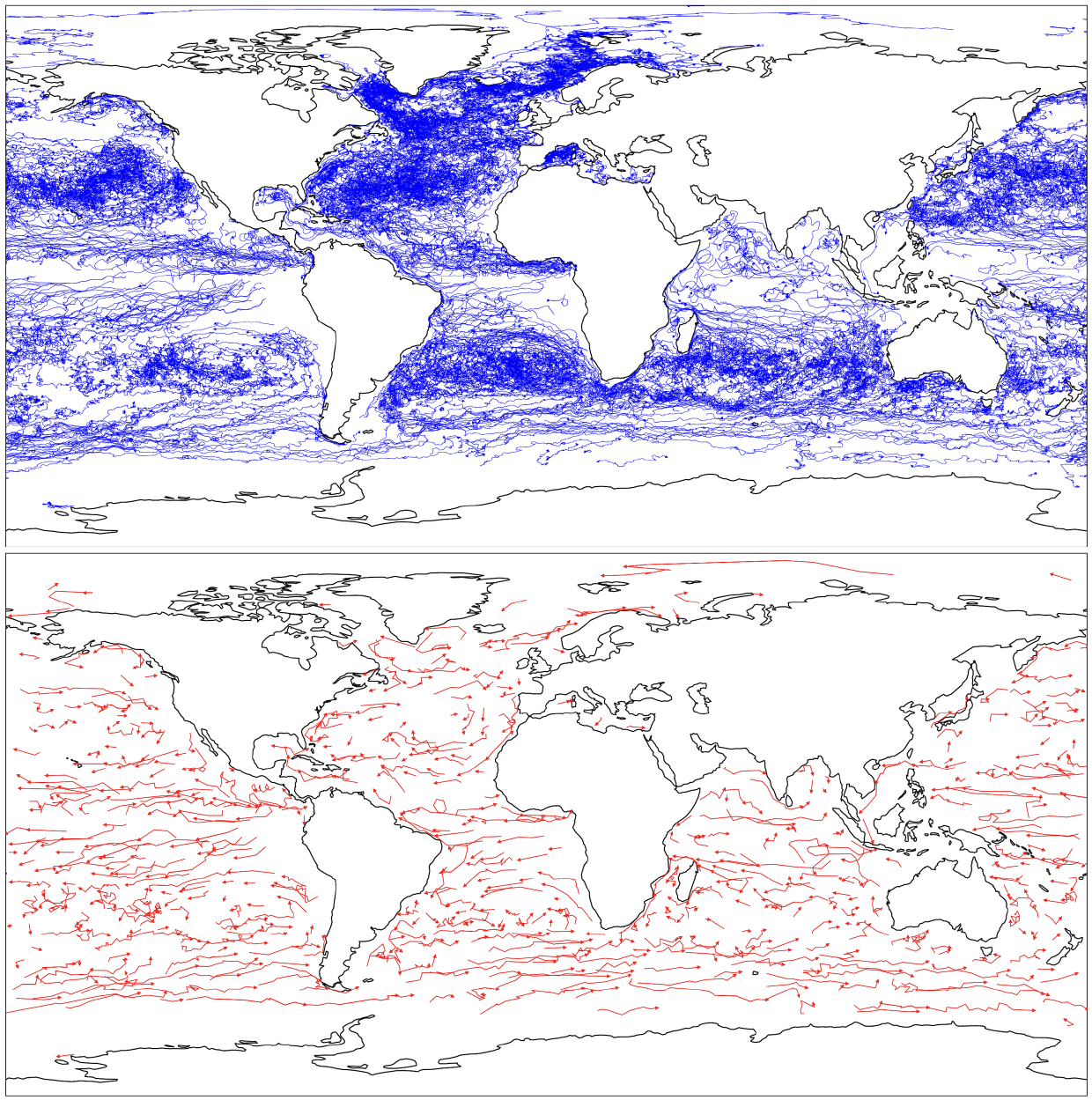}
    \caption{Illustration of $\approx2000$ individual ocean surface drifters and a resulting clustering.}
    \label{fig:alldrifters}
\end{figure}

\begin{figure*}
    \centering
    \includegraphics[width=0.75\textwidth]{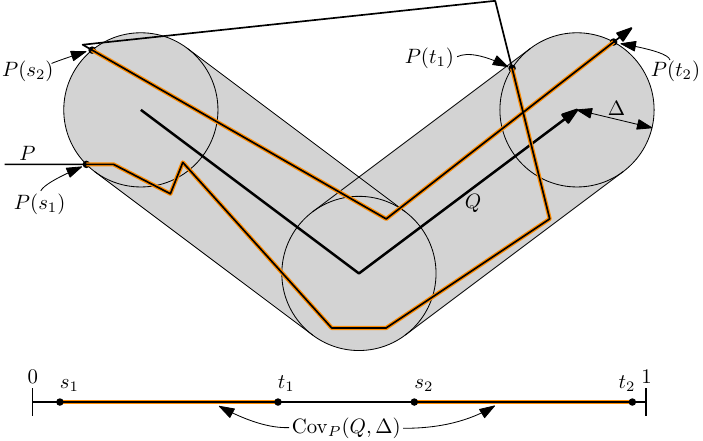}
    \caption{Illustration of the $\Delta$-coverage of $Q$ on the curve $P$.}
    \label{fig:coverage}
\end{figure*}

\subsection{Preliminaries}
A \emph{polygonal curve} $P$ in $\bR^d$ of \emph{complexity} $n$ is defined by an ordered set of points $(v_0,\ldots,v_n)\subset\bR^d$ by concatenating the linear interpolations of consecutive points. That is for each $1\leq i\leq n$ we obtain the \emph{edge} $e_i(t)=(1-t)v_{i-1}+tv_i$ and their concatenation $e_1\oplus\ldots\oplus e_n:[0,1]\rightarrow\bR^d$ defines $P$. We may denote an edge from $p$ to $q$ by $\overline{p\,q}$. We denote the set of all polygonal curves in $\bR^d$ of complexity at most $l\in\bN$ by $\bX^d_l$ and the complexity of a polygonal curve $P$ by $|P|$.
For a polygonal curve $P$ and given $0\leq s\leq t\leq 1$ we denote the \emph{subcurve} of $P$ from $P(s)$ to $P(t)$ by $P[s,t]$. If we drop the requirement that $s\leq t$, we say $P[s,t]$ is a \emph{free subcurve} of $P$. A free subcurve of $P$ is either a subcurve of $P$ or a subcurve of $P$ parametrized in the reverse direction.

For two curves in $\bR^d$ their continuous Fréchet distance is defined as
\[\df(P,Q)=\inf_{f,g}\max_{t\in[0,1]}\|P(f(t))-Q(g(t))\|\]
where $f$ and $g$ range over all non-decreasing surjective functions from $[0,1]$ to $[0,1]$.

Let $X$ be a set. A set $\RSpace$ where any $r \in \RSpace$ is of the form $r \subseteq X$ is called a \emph{set system} with \emph{ground set} $X$. A pair $(X,\RSpace)$ of a set system $\RSpace$ and its ground set $X$ is called a \textsc{SetCover} instance. An optimal solution to a \textsc{SetCover} instance $(X,\RSpace)$ is a set $S^*\subset\RSpace$ of minimal size, such that $\bigcup_{s\in S^*}s=X$.

\subsection{Problem definition}\label{sec:problem}
Let a polygonal curve $P$ in $\bR^d$ and a radius $\Delta>0$ together with $l\in\bN$ be given. Following Akitaya et al.~\cite{Akitaya2021Covering}, for any curve $C$ in $\bR^d$ we define
\[\Cov_P(C,\Delta) = \left(\bigcup_{0\leq s\leq t\leq 1,\,\df(P[s,t],C)\leq\Delta}[s,t]\right)\subset[0,1]\]
as the $\Delta$-\emph{coverage} of $C$, refer to Figure~\ref{fig:coverage}.
The objective is to find the smallest (w.r.t. its cardinality) set $\mathcal{C}$ of curves, each of complexity at most $l$, such that 
\[\bigcup_{C\in\mathcal{C}}\Cov_P(C,\Delta) = [0,1].\]
This can be interpreted as computing the optimal solution of the \textsc{SetCover} instance \[\left([0,1],\left\{\Cov_P(C,\Delta)\middle| C\in\bX^d_l\right\}\right).\] This point of view is the main perspective, from which we will analyze this problem.

\subsection{Related work}

The first work to appear in the line of clustering subtrajectories under the (discrete and continuous) Fréchet distance was by Buchin et al.~\cite{BuchinBGLL11}. They analyze the problem of identifying a single cluster with certain properties, such as the number of distinct subtrajectories or the length of the longest subtrajectory assigned to this cluster. They show NP-hardness-results for $(2-\eps)$-approximations as well as a matching polynomial $2$-approximation algorithm. Gudmundsson and Wong~\cite{abs-2110-15554} later presented a cubic lower-bound for the  problem of finding the largest cluster and show that this lower bound is tight. There are several more practical works that extend this approach to clustering, among them Gudmundsson and Valladares~\cite{GudmundssonV15} who presented a practical implementation on a GPU, which was later also applied to road network reconstruction from GPS data~\cite{buchinGroup2017,BuchinBGHSSSSSW20}. The algorithm repeatedly finds the largest cluster in the data and removes it, similar to the greedy \textsc{SetCover} algorithm. Buchin, Kilgus and Kölzsch~\cite{buchinGroup20} used this approach to extract migration patterns from GPS data of migrating animals.

Unfortunately, none of the clustering approaches mentioned so far offers theoretical guarantees as no explicit objective function for a cluster is formulated. 
In contrast, Agarwal et al.~\cite{agarwal2018} define an objective function for a subtrajectory clustering problem which is based on a facility location problem. They consider a weighted combination of objectives like the number of clusters, the radius of the clusters and the fraction of the curve that is not covered. For this problem, they show conditional NP-hardness results but also give a $O(\log^2 n)$-approximation algorithm in case of certain well-behaved classes of curves under the discrete Fréchet distance.

The objective function for the subtrajectory clustering problem that we want to analyze further in this paper was introduced by Akitaya et al.~\cite{Akitaya2021Covering}. They present a pseudo-polynomial bi-criterial approximation. Concretely, they introduced the problem stated in Section~\ref{sec:problem}. For given polygonal curve $P$ of complexity $n$, complexity parameter $l$ and radius $\Delta$ their algorithm finds a set of curves $\mathcal{C}$ such that $\bigcup_{C\in\mathcal{C}}\Cov_P(C,\alpha\Delta) = [0,1]$ of size $O(l^2\log(kl)k)$. Here $k$ is the smallest possible size of a set $\mathcal{C}^*$, whose coverage is $[0,1]$ and $\alpha=O(1)$. This algorithm  was subsequently improved by Brüning et al.~\cite{Brüning2022Faster} to a polynomial algorithm with an expected running time of $\Tilde{O}(k^2n + kn^3)$ and slightly better approximation bounds. Their approach is based on computing a sufficiently small \textsc{SetCover} instance with constant VC-dimension, which allows them to apply probabilistic $\eps$-net finder algorithms. The constant VC-dimension stems from considering center trajectories of complexity $l=1$ only, which is justified as any optimal subtrajectory cluster can be split at the vertices without losing coverage. This increases the number of clusters, but at the same time it decreases the VC-dimension of the set system, leading to an overall improvement in the approximation guarantees. 

\subsection{Our contribution}
The split of center curves into single edges as suggested by Brüning et al.~\cite{Brüning2022Faster}  is not particularly desirable in practice. The focus of this work is the extension to curves of non-constant complexity $l\in\bN_{\geq 2}$. While we stay in keeping with the framework suggested in~\cite{Brüning2022Faster}, the \textsc{SetCover} instances generated by our algorithm consist of centers of non-constant complexity. This comes at the expense of the running time as well as the approximation guarantee but allows the identified clusters to be of higher significance especially under practical considerations. For a given polygonal curve of complexity $n$ together with the parameter $l$ and radius $\Delta$ this new set system has size $O(ln^3)$ instead of $O(n^3)$. Each set in this set system can consist of up to $O(n)$ disjoint intervals in $[0,1]$. Storing these explicitly requires space in $O(ln^4)$ and this also bounds the total space used by our algorithm. The time to construct this set system is bounded by $O(l^2n^4)$. It further takes $O(kln^4\log(n))$ time to then greedily and deterministically find an $O(\log n)$-approximate \textsc{SetCover} solution in this set system. We observe however, that in practice this dependency is significantly better, and evaluate our approach on GPS data of ocean drifters, with a total complexity of $n\geq 10^6$. We further evaluate our approach on high-dimensional full-body motion capture data and compare the output to a state-of-the-art motion segmentation algorithm to argue the merit of our approach. 
We also demonstrate that in practice the approximation quality of our solutions is much better than suggested by the theoretical worst-case guarantees by comparing to the size of a greedily computed independent set.

\subsection{Subsequent work}

After the first publication our manuscript on arXiv, van der Hoog et al.~\cite{vanderhoog2024fasterdeterministicsubtrajectoryclustering} suggested a simple method to reduce the size of the generated set system by splitting the center curves at strategic points.  Their method leads to a candidate set of size $O(n^2 \log n)$ while increasing the number of sets in an optimal solution by a constant factor. 
We implemented and used this method in our experiments to enable the processing of larger input sets demonstrating the viability of this approach for subtrajectory clustering.

\section{A more structured set system}\label{sec:theory}

In this section we introduce the key players to our story. These allow us to impose some structure on the aforementioned set system at the cost of a constant approximation factor in both the radius, as well as the optimal solution size. With this reduction we follow Brüning et al.~\cite{Brüning2022Faster}. They combine the simplification of de Berg et al.~\cite{de2013fast} and Driemel et al.~\cite{driemelHW12} retaining the central properties of both. It is defined as follows.

\begin{definition}[$\Delta$-good simplification \cite{Brüning2022Faster}]
Let $P$ be a polygonal curve in $\bR^d$ defined by the vertices $(v_0,\ldots,v_n)$ together with a parameter $\Delta>0$ be given. We call a curve $S$ defined by the vertices $(v_{i_0},\ldots,v_{i_k})$ for $0\leq i_0<\ldots<i_k\leq n$ a simplification of $P$. Denote the polygonal curve defined by the vertices $(v_i,\ldots,v_j)$ for some $0\leq i\leq j\leq n$ by $P_{i,j}$. We say a simplification defined by the vertices $(v_{i_0},\ldots,v_{i_k})$ is $\Delta$-good if the following properties hold:
    \begin{compactenum}[(i)]
        \item $\|v_{i_j}-v_{i_{j+1}}\|\geq\frac{\Delta}{3}$ for $0\leq j<k$,
        \item $\df(P_{{i_j},{i_{j+1}}},\overline{v_{i_j}\,v_{i_{j+1}}}) \leq 3\Delta$ for all $0\leq j<k$,
        \item $\df(P_{1,{i_1}},\overline{v_{i_1}\,v_{i_1}})\leq3\Delta$,
        \item $\df(P_{{i_k},n},\overline{v_{i_k}\,v_{i_k}})\leq3\Delta$,
        \item $\df(P_{{i_j}{i_{j+2}}},\overline{v_{i_j}\,v_{i_{j+2}}}) > 2\Delta$ for all $0\leq j<k-1 $.
    \end{compactenum}
\end{definition}

These properties guarantee, that no edge is too short (\textit{(i)}), the Fréchet distance of a $\Delta$-good simplification to its underlying curve is at most $3\Delta$ (\textit{(ii) - (iv)}) and that the complexity of a $\Delta$-good simplification can not greedily be reduced (\textit{(v)}).

\begin{lemma}[\cite{Brüning2022Faster}]\label{lem:simplification}
    There is an algorithm that computes a $\Delta$-good simplification of any polygonal curve $P$ in $\bR^d$ of complexity $n$ and $\Delta>0$. Furthermore it does so in $O(n\log^2n)$ time assuming $d$ is a constant. 
\end{lemma}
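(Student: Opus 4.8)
The plan is to build a $\Delta$-good simplification greedily, scanning the vertices of $P$ from left to right and committing to one simplification vertex at a time, and to keep the running time near-linear by backing each extension step with a data structure for Fréchet-distance-to-a-segment queries. Concretely: set $i_0=0$; given the current anchor $v_{i_j}$, let $i_{j+1}$ be the \emph{largest} index $m>i_j$ for which $\df(P_{i_j,m},\overline{v_{i_j}v_m})\le 3\Delta$ (subject also to the minimum-length condition discussed next), stopping when no such index exists and handling the leftover suffix -- and symmetrically the initial piece -- by a dedicated rule that enforces (iii) and (iv). Property~(ii) is then immediate, and property~(v) is forced by maximality: were $\df(P_{i_j,i_{j+2}},\overline{v_{i_j}v_{i_{j+2}}})\le 2\Delta$, then since $2\Delta\le 3\Delta$ the index $i_{j+2}$ would have been an admissible, strictly larger choice when standing at $v_{i_j}$, contradicting the definition of $i_{j+1}$; hence $\df(P_{i_j,i_{j+2}},\overline{v_{i_j}v_{i_{j+2}}})>2\Delta$.

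To also secure property~(i) we graft on the ball-based shortcutting of Driemel et al.~\cite{driemelHW12}: among the candidate extensions we additionally require $\|v_{i_j}-v_m\|\ge\Delta/3$. This is never vacuous, because if $m^{*}$ is the first index past $i_j$ with $\|v_{i_j}-v_{m^{*}}\|\ge\Delta/3$ then every earlier vertex of $P_{i_j,m^{*}}$ lies in the ball of radius $\Delta/3$ about $v_{i_j}$, so $P_{i_j,m^{*}}$ can be matched to its chord within $\Delta/3\le 3\Delta$ and $m^{*}$ is admissible; the same observation shows the extra requirement never falsifies~(ii). The slack between the threshold $3\Delta$ and the smaller constants $\Delta/3$ and $2\Delta$ is what makes the minimum-length requirement, the coverage requirement, and the no-greedy-reduction requirement coexist; getting this interplay right requires care in balancing the constants, and it is exactly the step at which the construction of Brüning et al.~\cite{Brüning2022Faster} interleaves the two simplification schemes of \cite{de2013fast} and \cite{driemelHW12}.

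It remains to bound the running time by $O(n\log^2 n)$ for constant $d$. A single test $\df(P_{i,m},\overline{v_i v_m})\le 3\Delta$ is decidable in $O(m-i)$ time: using that the Fréchet distance of two segments equals the larger of the two endpoint distances, the free-space diagram of a segment against a polygonal curve reduces to checking whether the intervals $I_r=\{t\in[0,1]:\|v_r-((1-t)v_i+tv_m)\|\le 3\Delta\}$, $i\le r\le m$, admit a nondecreasing selection from $t=0$ to $t=1$, which a single greedy sweep resolves. Running this afresh for every probe would cost $\Theta(n^2)$, so instead we keep $P$ in a Fréchet query structure in the spirit of de Berg et al.~\cite{de2013fast} that answers, for a given anchor, how far a single $3\Delta$-close chord reaches in polylogarithmic time. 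The crux, and the reason a plain exponential or binary search on the naive decision procedure does not work, is that this reachability predicate is \emph{not} monotone in $m$: as $m$ grows the target chord $\overline{v_i v_m}$ rotates and translates, so $3\Delta$-closeness of $P_{i,m}$ to its chord carries no information about any $m'<m$. Overcoming this means tracking how the free-space constraints evolve as the chord endpoint slides and re-using work across candidates instead of re-solving each one; doing so with one logarithmic factor per query and at most $O(\log n)$ search steps per anchor, amortized over the $O(n)$ vertices of $P$, yields the claimed $O(n\log^2 n)$ bound.
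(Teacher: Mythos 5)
This lemma is imported from Brüning et al.~\cite{Brüning2022Faster}; the paper gives no proof of its own, so your proposal has to stand on its own as a reconstruction. It captures the right architecture (a greedy forward scan with maximal extension, the slack between the $3\Delta$ admissibility threshold and the $2\Delta$ irreducibility threshold in (v), and a Driemel--Har-Peled-style ball argument for the minimum edge length in (i)), but it has two genuine gaps. The decisive one is the running time: the entire content of the $O(n\log^2 n)$ bound is the ability to find, from a given anchor, a suitable extension index without paying linear time per probe, and you yourself identify the obstruction --- the predicate $\df(P_{i,m},\overline{v_i\,v_m})\le 3\Delta$ is not monotone in $m$, so exponential/binary search on the naive $O(m-i)$ decision procedure is inapplicable --- and then assert, rather than construct, a data structure that ``tracks how the free-space constraints evolve as the chord endpoint slides.'' That sentence is a restatement of the problem, not a solution; without specifying the query structure (what it stores, what query it answers, with what approximation guarantee) and how an approximate answer is absorbed by the $2\Delta$-versus-$3\Delta$ gap, the lemma's time bound is unproven. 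Indeed, insisting on the \emph{exact} largest admissible $m$ (which your property-(v) argument relies on) is precisely what makes the search hard; the known constructions give up exact maximality and let the constant slack cover for an approximate search, which changes the correctness argument for (v) as well.

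The second gap is in the correctness of (v) as you argue it. You make admissibility of a candidate $m$ require both $\df(P_{i_j,m},\overline{v_{i_j}\,v_m})\le 3\Delta$ and $\|v_{i_j}-v_m\|\ge\Delta/3$. Your proof of (v) derives a contradiction from $i_{j+2}$ being admissible from anchor $i_j$, but $i_{j+2}$ may fail the second condition: nothing prevents $\|v_{i_j}-v_{i_{j+2}}\|<\Delta/3$ even though $\|v_{i_j}-v_{i_{j+1}}\|\ge\Delta/3$ and $\|v_{i_{j+1}}-v_{i_{j+2}}\|\ge\Delta/3$ (the third vertex can loop back toward the first). In that case $\df(P_{i_j,i_{j+2}},\overline{v_{i_j}\,v_{i_{j+2}}})\le 2\Delta$ is entirely possible and your maximality argument does not fire, so property (v) is not established. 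This is exactly the ``interplay of constants'' you flag as requiring care; the care has to actually be taken, e.g.\ by decoupling the length filter from the Fréchet filter (first shortcut within balls, then simplify the resulting curve) as in the combination of \cite{driemelHW12} and \cite{de2013fast} that \cite{Brüning2022Faster} performs. The treatment of the boundary properties (iii) and (iv) by ``a dedicated rule'' would also need to be spelled out, though that part is routine.
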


Brüning et al.~\cite{Brüning2022Faster} showed that any solution of the \textsc{SetCover} instance $([0,1],\{\Cov_P(C,\Delta)\mid C\in\bX^d_l\})$ induces a solution of the \textsc{SetCover} instance with set system $\{\Cov_P(S[s,t],11\Delta)\mid 0\leq s\leq t\leq 1, |S[s,t]|\leq l\}$ where $S$ is some $\Delta$-good simplification of $P$.

\begin{theorem}[\cite{Brüning2022Faster}]\label{thm:simp}
Let a polygonal curve $P$ and values $\Delta>0$ and $l$ be given. Let $S$ be a \mbox{$\Delta$-good} simplification of $P$. Let $\mathcal{C}$ be a set of curves of size $k$ each with complexity at most $l$ such that $\bigcup_{C\in\mathcal{C}}\Cov_P(C,\Delta)=[0,1]$. Then there is a set $\mathcal{C}_S$ of size $3k$ of subcurves of $S$ each of which has complexity at most $l$ with $\bigcup_{C\in\mathcal{C}_S}\Cov_S(C,8\Delta)=[0,1]$ and thus $\bigcup_{C\in\mathcal{C}_S}\Cov_P(C,11\Delta)=[0,1]$.
\end{theorem}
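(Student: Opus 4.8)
The plan is to take an arbitrary center curve $C\in\mathcal{C}$, understand which subcurve of $P$ it covers, transport that coverage onto the simplification $S$ by replacing $C$ with a suitable subcurve of $S$, and then split that subcurve at vertices of $S$ to control its complexity — paying a factor $3$ in the number of centers for this splitting. Throughout I would work with one $C\in\mathcal{C}$ at a time and with a maximal interval $[s,t]\subseteq\Cov_P(C,\Delta)$ on which the coverage is "witnessed", i.e.\ with $\df(P[s,t],C)\le\Delta$; taking the union over all such maximal intervals and over all $C$ recovers $[0,1]$, so it suffices to handle each witness interval separately. Fix such an interval and a Fréchet matching between $P[s,t]$ and $C$.

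First I would identify the vertices of $S$ that "bracket" the subcurve $P[s,t]$. Since $S$ is defined by a subsequence $(v_{i_0},\dots,v_{i_k})$ of the vertices of $P$, let $v_{i_a}$ be the last simplification-vertex at or before the point $P(s)$ (in the vertex ordering of $P$) and $v_{i_b}$ the first simplification-vertex at or after $P(t)$; these exist after handling the prefix/suffix via properties (iii)--(iv). The key metric estimate is that the subcurve $S[s',t']$ of $S$ running from $v_{i_a}$ to $v_{i_b}$ has small Fréchet distance to $P[s,t]$: one extends $P[s,t]$ slightly to the subcurve $P_{i_a,i_b}$ of $P$, uses property (ii) edge-by-edge (plus (i) for the non-degeneracy needed to concatenate the edgewise matchings) to get $\df(P_{i_a,i_b},S[s',t'])\le 3\Delta$, and bounds the Fréchet distance between $P[s,t]$ and the slightly-larger $P_{i_a,i_b}$ by $3\Delta$ as well, since the overhang lies within one simplification edge whose endpoints are within $3\Delta$ of the corresponding subcurve of $P$. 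Combining via the triangle inequality for the Fréchet distance and with $\df(P[s,t],C)\le\Delta$, one gets that $S[s',t']$ is close to $C$; the constants are tuned so the final radius on $S$ comes out to $8\Delta$ (and then $\Cov_P(\cdot,11\Delta)$ follows from $\df(P,S)\le 3\Delta$ and the triangle inequality applied to coverage).

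Now $S[s',t']$ covers $[s,t]$ with radius $8\Delta$, but it may have complexity larger than $l$: it has up to $b-a+1$ vertices, and while $C$ has at most $l$ vertices, the bracketing subcurve of $S$ need not. This is where the factor $3$ enters. Using the Fréchet matching between $P[s,t]$ and $C$, the $l-1$ interior vertices of $C$ partition $P[s,t]$ into at most $l$ pieces, each matched to a single edge of $C$; the union of the corresponding bracketed subcurves of $S$ still covers $[s,t]$, and now I claim each such piece's bracketed subcurve of $S$ has complexity $O(1)$ — this needs property (v) of the $\Delta$-good simplification, which prevents three consecutive simplification edges from all being "shortcut-able", so a subcurve of $S$ close to a single segment cannot be too long. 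Being a bit careful with how the constant comes out to exactly $3$ per original center (rather than some larger constant times $l$) is the delicate bookkeeping step, and it is essentially where the $l$-complexity of the centers is preserved: each of the $k$ original centers spawns at most $3$ subcurves of $S$, each of complexity at most $l$.

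The main obstacle I anticipate is the complexity-control step, i.e.\ showing the replacement subcurves of $S$ truly have complexity at most $l$ (not merely $O(l)$) while the center count grows only by a factor $3$ — this is exactly the interplay of the Fréchet matching decomposition of $C$ into $l$ edges with property (v), and getting the alignment of "which simplification vertices get absorbed into which piece" right without double-counting or losing coverage at the seams. The metric estimates (Steps with properties (i)--(iv)) are routine triangle-inequality-and-concatenation arguments by comparison; it is the cardinality accounting that carries the weight of the theorem.
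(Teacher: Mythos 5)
Note first that the paper does not actually prove Theorem~\ref{thm:simp}; it is imported from Brüning et al.~\cite{Brüning2022Faster}, so I am comparing your proposal against the argument of that reference. Your two technical ingredients are essentially the right ones: properties (i)--(iv) do give, for any witness subcurve $P[s,t]$ with $\df(P[s,t],C)\le\Delta$, a bracketed subcurve of $S$ at Fréchet distance $O(\Delta)$, and property (v) is indeed what limits how many vertices of $S$ a single edge of $C$ can absorb, so that the bracketed subcurve has complexity $O(l)$ and can be cut into $3$ pieces of complexity at most $l$ without losing coverage. The genuine gap is in the cardinality accounting. The set $\Cov_P(C,\Delta)$ is in general a union of \emph{many} disjoint maximal witness intervals --- up to $O(n)$ of them, cf.\ Lemma~\ref{lem:discrete} --- and these witness intervals lie in disjoint parts of $P$ and therefore bracket \emph{disjoint} subcurves of $S$. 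Your plan ``it suffices to handle each witness interval separately'' thus produces a separate triple of subcurves of $S$ for each maximal interval, i.e.\ $O(kn)$ curves in total, not $3k$. Your closing sentence asserts ``$3$ per original center,'' but the construction as described only delivers ``$3$ per witness interval.''

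The missing idea is to choose a \emph{single} representative subcurve $S[x,y]$ per center $C$ (from one arbitrary witness $P[a,b]$), split only that one into three pieces, and route every other witness through $C$ by the triangle inequality: for any other witness interval $[a_j,b_j]$ with bracketed subcurve $S[x_j,y_j]$ one has $\df(S[x_j,y_j],S[x,y])\le \df(S[x_j,y_j],P[a_j,b_j])+\df(P[a_j,b_j],C)+\df(C,P[a,b])+\df(P[a,b],S[x,y])\le 3\Delta+\Delta+\Delta+3\Delta=8\Delta$, so $[x_j,y_j]$ already lies in $\Cov_S(S[x,y],8\Delta)$ (and the matching realizing this distance splits $[x_j,y_j]$ compatibly among the three pieces of $S[x,y]$). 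This four-link chain is the only place the radius $8\Delta$ is actually earned; in your per-interval scheme the covering curve is the bracketed subcurve itself, at distance $0$ from the witness, and the fact that you had to declare the constants ``tuned so the final radius comes out to $8\Delta$'' rather than derive them should have been the warning sign that the one-representative-per-center mechanism was missing.
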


\section{A finite set-system}\label{sec:finiteset}

Similar to Brüning et al.~\cite{Brüning2022Faster} we now use the set system $\{\Cov_P(S[s,t],11\Delta)\mid 0\leq s\leq t\leq 1, |S[s,t]|\leq l\}$ as an intermediary set system. Based on this intermediary set system we introduce a new set system, which consists of only $O(n^3l)$ many subcurves of some $\Delta$-good simplification $S$ of $P$.

\subsection{Extremal Candidates}

\begin{definition}[$\Delta$-free space]
    Let $P$ and $Q$ be two polygonal curves parametrized over $[0,1]$.
	The free space diagram of $P$ and $Q$ is their joint parameter space $[0,1]^2$ together with a not necessarily uniform grid, where each vertical line corresponds to a vertex of $P$ and each horizontal line to a vertex of $Q$.
	The $\Delta$-\emph{free space} of $P$ and $Q$ is defined as \[ \dfree{}{}(P,Q) = \left\{(x,y)\in[0,1]^2 \mid \|P(x) -Q(y)\|\leq\Delta \right\} \] 
	This is the set of points in the parametric space, whose corresponding points on $P$ and $Q$ are at a distance at most $\Delta$.
	The edges of $P$ and $Q$ segment the free space into cells.
	We call the intersection of $\dfree{}{}(P,Q)$ with the boundary of cells the $\Delta$-\emph{free space intervals}. Refer to Figure~\ref{fig:key} $a)$.
\end{definition}

Alt and Godau~\cite{AltG95} showed that the $\Delta$-free space inside any cell is an ellipse intersected with the cell and thus convex and of constant complexity.
They further showed that the Fréchet distance between two curves $P$ and $Q$ is less than or equal to $\Delta$ if and only if there exists a path $\pi:[0,1]\rightarrow\dfree{}{}(P,Q)$ that starts at $(0,0)$, ends in $(1,1)$ and is monotone in both coordinates. By this analysis, two free subcurves $P[a,c]$ and $Q[b,d]$ have Fréchet distance at most $\Delta$ if and only if there exists a path $\pi:[0,1]\rightarrow\dfree{}{}(P,Q)$ that starts at $(a,b)$, ends in $(c,d)$ and is monotone in both coordinates. Note that in the case of two free subcurves, $\pi$ can be monotonically increasing or monotonically decreasing in its coordinates, depending only on if $a\leq c$ or $c\leq a$ and similarly $b\leq d$ or $d\leq b$.

In order to reduce the size of the given set system for some polygonal curve $P$ and $\Delta$, we inspect the $11\Delta$-free space of a $\Delta$-good simplification $S$ of $P$ with $P$. Conceptually, we want to do the following: Start with a subcurve $S[s,t]$ of $S$ that induces the set $\Cov_P(S[s,t],11\Delta)$ in the set system. $\Cov_P(S[s,t],11\Delta)$ can by definition be described as the union of intervals $[a_i,b_i]$, such that $\df(S[s,t],P[a_i,b_i])\leq11\Delta$. Now the point $(a_i,s)$ lies in some cell, and this cell has a (not necessarily unique) left-most point. We would like to modify $s$ and $t$ as well as all $a_i$ and $b_i$ in such a way, that \textit{(i)} the new values $s'$ and $t'$ of $s$ and $t$ are defined by the $y$-coordinates of a left-/right-most point in some cell of the $11\Delta$-free space and $\textit{(ii)}$ the resulting interval $[a_i',b_i']$ includes $[a_i,b_i]$. While this is not necessarily possible, we prove that this can be achieved with a constant number of such subcurves of $S$.

\begin{figure*}
    \centering
    \includegraphics[width=\textwidth]{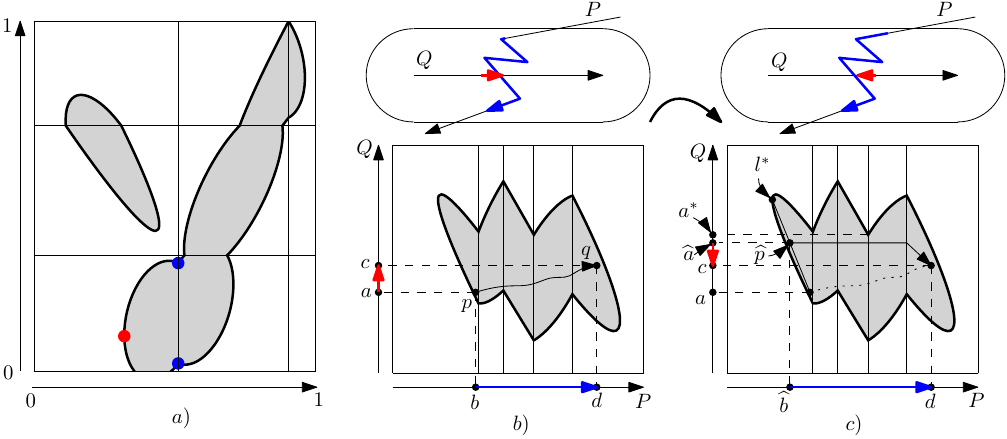}
    \caption{$a)$: Example of the $\Delta$-free space of two curves. Further illustrated is the unique left $\Delta$-extremal point in red in the free space as well as the two right $\Delta$-extremal points in blue in the lower left cell of the $\Delta$-free space. $b),c)$: Illustration to the proof of Lemma \ref{lem:key}. Depicted are the $\Delta$-free spaces of $P$ and $Q$, as well as the path from $p$ to $q$ before the modification in $b)$ and from $\widehat{p}$ to $q$ after the modification in $c)$ as in the proof of Lemma~\ref{lem:key}.}
    \label{fig:key}
\end{figure*}

\begin{definition}[$\Delta$-extremal points]
    Let $P$ and $Q$ be two polygonal curves parametrized over $[0,1]$. For every cell $C$ of the free space diagram of $P$ and $Q$ define its left $\Delta$-extremal points as either the unique left-most point inside the $\Delta$-free space in $C$ or, in case the left-most point is not unique, the upper-most and lower-most left-most point inside the $\Delta$-free space in $C$. Similarly define the right $\Delta$-extremal points of each $C$. The union of left (resp. right) $\Delta$-extremal points of all cells is defined to be the set of left (resp. right) $\Delta$-extremal points of $P$ and $Q$. Refer to Figure~\ref{fig:key} $a)$.
\end{definition}

Observe that for given polygonal curves $P$ and $Q$ of complexity $n$ the set of $\Delta$-extremal points can be computed in $O(n^2)$ time by scanning over every cell and computing these points in $O(1)$.

\begin{lemma}\label{lem:key}
Let two subcurves $P[a,c]$ and $Q[b,d]$ of polygonal curves $P$ and $Q$ as well as a value $\Delta>0$ be given such that $\df(P[a,c],Q[b,d])\leq\Delta$. Then there are values $b^*,d^*\in[0,1]$ defined by $y$-coordinates of $\Delta$-extremal points of $P$ and $Q$ such that for any $\widehat{b}$ between $b$ and $b^*$ and any $\widehat{d}$ between $d$ and $d^*$ the free subcurve $Q[\widehat{b},\widehat{d}]$ of $Q$ induces a subcurve $P[\widehat{a},\widehat{c}]$ of $P$ with $\df(P[\widehat{a},\widehat{c}],Q[\widehat{b},\widehat{d}])\leq\Delta$ and $[a,c]\subset[\widehat{a},\widehat{c}]$. 
\end{lemma}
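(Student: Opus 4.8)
The plan is to take a monotone path witnessing $\df(P[a,c],Q[b,d])\le\Delta$ and to reroute its two ends locally, pinning each end to a $y$-coordinate of a $\Delta$-extremal point while the $P$-coordinate of the start is only pushed toward $0$ and that of the end only toward $1$. By Alt--Godau there is a path $\pi$ in $\dfree{}{}(P,Q)$ from $(a,b)$ to $(c,d)$ that is monotone in both coordinates, and inside any cell the $\Delta$-free space is convex, so between any two of its points that are comparable coordinatewise the connecting straight segment stays inside it and is monotone.

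I would treat the two ends separately and symmetrically. For the start $p=(a,b)$, let $C$ be the cell whose $\Delta$-free space region $F$ the path $\pi$ traverses right after $p$, let $\ell_x$ be the smallest $P$-coordinate attained in $F$, and let $\ell^-=(\ell_x,\ell_y^-)$, $\ell^+=(\ell_x,\ell_y^+)$ be the lowest and highest left $\Delta$-extremal points of $C$ (equal if the left-most point is unique). According to whether $b$ lies above $[\ell_y^-,\ell_y^+]$, below it, or inside it, I set $b^*$ to $\ell_y^+$, to $\ell_y^-$, or to either of the two; the point is that in all three cases, for every $\widehat{b}$ between $b$ and $b^*$, convexity of $F$ provides a value $\widehat{a}\le a$ with $(\widehat{a},\widehat{b})\in F$ (the left boundary of $F$, as a function of height, is minimized over $[\ell_y^-,\ell_y^+]$ and is at most $a$ at height $b$, hence at most $a$ between $b$ and $b^*$), and the portion of $\pi$ inside $C$ can be replaced by a monotone segment from $(\widehat{a},\widehat{b})$ that re-joins $\pi$ — either at $p$, or at the point of $\pi$ of height $\widehat{b}$. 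Running the mirror argument at $q=(c,d)$ with the right $\Delta$-extremal points of the cell that $\pi$ traverses just before $q$ yields $d^*$ and, for every $\widehat{d}$ between $d$ and $d^*$, a value $\widehat{c}\ge c$ and a monotone re-attachment. Splicing the rerouted start, the middle part of $\pi$ common to both reroutings, and the rerouted end gives a single monotone path from $(\widehat{a},\widehat{b})$ to $(\widehat{c},\widehat{d})$ in $\dfree{}{}(P,Q)$; by Alt--Godau this certifies $\df(P[\widehat{a},\widehat{c}],Q[\widehat{b},\widehat{d}])\le\Delta$, and $\widehat{a}\le a\le c\le\widehat{c}$ gives $[a,c]\subset[\widehat{a},\widehat{c}]$.

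The step I expect to be the main obstacle is verifying that the rerouted ends re-join $\pi$ along a monotone path that stays in the free space: this is immediate when the point of $\pi$ at the target height still lies in $C$, but needs care when $\pi$ has already left $C$ before reaching that height (for instance when the left-most point of $C$ is higher than the exit point of $\pi$ from $C$), and also in the non-unique case where an extremal point is really a vertical free-space segment rather than a single point. I expect the fix is to choose $b^*$ (and $d^*$) so that the start is never raised past the height at which $\pi$ leaves $C$, together with a separate treatment of the degenerate situation in which $\pi$ never leaves a single row of the free space diagram --- i.e.\ $Q[b,d]$ lies on one edge of $Q$ --- where one instead joins $(\widehat{a},\widehat{b})$ to $(\widehat{c},\widehat{d})$ by a single straight segment, monotone increasing or decreasing in the $Q$-coordinate as appropriate, which lies in the free space by convexity of that row.
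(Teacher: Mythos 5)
Your overall strategy --- rerouting the two endpoints of a monotone witness path using convexity of the free space inside cells and the left/right $\Delta$-extremal points --- is the same as the paper's, and the first half of your argument (producing $(\widehat a,\widehat b)$ in the free space with $\widehat a\le a$ for every $\widehat b$ between $b$ and the chosen extremal height of the cell $C$ containing $p$) is sound. The gap is exactly the obstacle you flag at the end, and your proposed fix does not close it. Reattaching by a straight segment from $(\widehat a,\widehat b)$ to the point of $\pi$ at height $\widehat b$ is justified by convexity only when that point still lies in $C$; when $\pi$ exits $C$ through its right boundary below height $\widehat b$, the segment crosses several cells and can leave the free space. Your fix --- ``choose $b^*$ so that the start is never raised past the height at which $\pi$ leaves $C$'' --- is inadmissible: that exit height depends on the particular path $\pi$ and is in general \emph{not} the $y$-coordinate of any $\Delta$-extremal point, so the resulting $b^*$ would violate the statement of the lemma and would destroy the finite discretization that Theorem~\ref{thm:main1} and Theorem~\ref{thm:main2} build on top of it.

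The paper avoids this by defining $b^*$ \emph{globally} rather than cell-locally: after deciding (from the extremal point $l^*$ of the cell containing $p$) whether to move up or down, it takes $b^*$ to be the $y$-coordinate of the \emph{nearest} left $\Delta$-extremal point of \emph{any} cell on that side of $b$. This guarantees that no left-extremal $y$-coordinate lies strictly between $b$ and $\widehat b$; since the endpoints of a nonempty free-space interval on a vertical cell boundary are precisely left extremal points of the cell to its right, every such interval containing height $b$ also contains height $\widehat b$. Consequently one can walk \emph{horizontally} to the right from $\widehat p$, staying inside the free space, until one meets $\pi$ or enters the cell of $q$, and splice there --- no straight chord across multiple cells is needed. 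With your cell-local choice, the interval between $b$ and $b^*$ may straddle extremal heights of other cells and this reattachment can fail. The missing ingredient is therefore the global minimality of $b^*$ among extremal coordinates (plus the horizontal-walk reattachment), not a cap depending on $\pi$.
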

\begin{proof}
Since $\df(P[a,c],Q[b,d])\leq\Delta$, there is a path $\pi:[0,1]\rightarrow\dfree{}{}(P,Q)$ that starts at $p=(a,b)$, ends in $q=(c,d)$ and is monotone in both coordinates. First check if the left $\Delta$-extremal point $l^*$ of the cell containing $p$ is above or below the point $p$. In case of ambiguity pick either one of the two left $\Delta$-extremal points of the cell containing $p$ to be $l^*$. If $l^*$ lies above $p$, define $b^*$ as the $y$-coordinate of the lowest left $\Delta$-extremal point in the $\Delta$-free space of $P$ and $Q$ that is above $p$. Otherwise define $b^*$ as the $y$-coordinate of the highest left $\Delta$-extremal point that is below $p$. Now let $\widehat{b}$ be given as a value inbetween $b$ and $b^*$. As $\widehat{b}$ lies in between the $y$-coordinate of $p$ and that of $l^*$, there is a point $\widehat{p}$ on the line from $p$ to $l^*$ with $y$-coordinate $\widehat{b}$. Observe that $\widehat{p}$ always lies to the left of $p$. Thus for the $x$-coordinate $\widehat{a}$ of $\widehat{p}$ we have that $[a,c]\subset[\widehat{a},c]$.

We now show that there is a monotone path from $\widehat{p}$ to $q$ inside the $\Delta$-free space and thus $\df(P[\widehat{a},c],Q[\widehat{b},d])\leq\Delta$. Assume that $\widehat{p}$ lies above $p$, as otherwise by convexity of the free space in every cell we can concatenate a straight line from $\widehat{p}$ to $p$ with $\pi$.

First, observe that we can walk straight to the right from $\widehat{p}$ until we either intersect the path $\pi$ or intersect the boundary of the cell containing $q$ on the right side. Indeed, every free space interval that contains the $y$-coordinate of $p$ also contains the $y$-coordiante of $\widehat{p}$ by definition of $b^*$. If the ray intersects $\pi$ then, again, we can construct such a path by concatenating a straight line from $\widehat{p}$ to this intersection point with the second piece of $\pi$. Hence, assume the straight line does not intersect $\pi$ until it intersects the boundary of the cell containing $q$. Then, we construct the path by walking from $\widehat{p}$ to the right, until we first enter the cell containing $q$ and then by convexity we can again connect this straight line with a second straight line to $q$ resulting in a monotone path.

Next we construct $d^*$ in a similar fashion, except we use the right-most point $r^*$ of the cell containing $q$ instead of left-most points. Yielding both $b^*$ and $d^*$ together with a path from $\widehat{p}$ to $\widehat{q}$ for any $\widehat{b}$ in between $b$ and $b^*$ and $\widehat{d}$ in between $d$ and $d^*$. 
\end{proof}


\begin{figure*}
\centering
    \includegraphics[width=\textwidth]{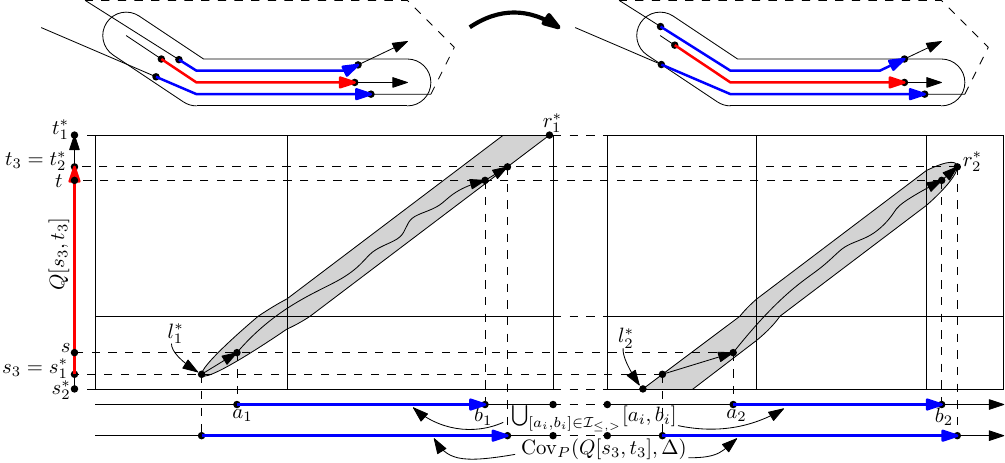}
    \vspace{-0.3cm}
    \caption{Illustration for the proof of Theorem~\ref{thm:main1}. The inclusion-wise increase of the $\Delta$-Coverage for all intervals in $\mathcal{I}_{\leq,>}=\{[a_1,b_1],[a_2,b_2]\}$ is depicted. The intervals $[a_1,b_1]$ and $[a_2,b_2]$ are in $\mathcal{I}_{\leq,>}$ because the left $\Delta$-extremal points $l_1^*$ and $l_2^*$ of the cells containing $(a_1,s)$ and $(a_2,s)$ lie below $s$, and similarly $r_1^*$ and $r_2^*$ lie above $t$.}
    \label{fig:thmkey}
\end{figure*}

\begin{theorem}\label{thm:main1}
Let $P$ and $Q$ be two given polygonal curves together with a value $\Delta>0$.
For every $s,t\in[0,1]$ there are at most eight values $s_1,\ldots,s_4,t_1,\ldots,t_4\in[0,1]$ defined by $y$-coordinates of $\Delta$-extremal points of $P$ and $Q$ such that
\[\Cov_P(Q[s,t],\Delta)\subset\bigcup_{i=1}^4\Cov_P(Q[s_i,t_i],\Delta).\]
\end{theorem}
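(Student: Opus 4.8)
The plan is to write $\Cov_P(Q[s,t],\Delta)$ as a union of intervals $[\alpha,\beta]$ (with $\alpha\le\beta$ and $\df(P[\alpha,\beta],Q[s,t])\le\Delta$), to apply Lemma~\ref{lem:key} to every one of these intervals at once, and to observe that the resulting ``snapped'' parameters on $Q$ take only four distinct pairs of values. Concretely, for a fixed interval $[\alpha,\beta]$ I would invoke Lemma~\ref{lem:key} with $P[a,c]:=P[\alpha,\beta]$ and $Q[b,d]:=Q[s,t]$. This produces values $s^*,t^*\in[0,1]$ (the $b^*,d^*$ of the lemma), each a $y$-coordinate of a $\Delta$-extremal point of $P$ and $Q$, together with the guarantee that for every $\widehat b$ between $s$ and $s^*$ and every $\widehat d$ between $t$ and $t^*$ the free subcurve $Q[\widehat b,\widehat d]$ induces a subcurve $P[\widehat\alpha,\widehat\beta]$ with $[\alpha,\beta]\subset[\widehat\alpha,\widehat\beta]$ and $\df(P[\widehat\alpha,\widehat\beta],Q[\widehat b,\widehat d])\le\Delta$.

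The key observation is that, although $s^*$ is defined via the cell of the free space diagram containing $(\alpha,s)$ and hence a priori depends on $[\alpha,\beta]$, its actual value is one of only two numbers. Reading off the construction of $b^*$ in the proof of Lemma~\ref{lem:key}: if the left $\Delta$-extremal point of the cell containing $(\alpha,s)$ lies above the horizontal line $y=s$, then $s^*$ is the $y$-coordinate of the lowest left $\Delta$-extremal point above that line, and otherwise $s^*$ is the $y$-coordinate of the highest left $\Delta$-extremal point below it; neither candidate depends on $\alpha$. Call them $s^\uparrow$ and $s^\downarrow$. By the symmetric argument at the right endpoint, $t^*\in\{t^\uparrow,t^\downarrow\}$, determined by whether the right $\Delta$-extremal point of the cell containing $(\beta,t)$ lies above or below $y=t$. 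Thus at most four distinct pairs $(s^*,t^*)$ occur; listing them as $(s_1,t_1),\dots,(s_4,t_4)$ uses at most eight values, each a $y$-coordinate of a $\Delta$-extremal point. This partition of the intervals into four classes according to the two binary alternatives is precisely the situation drawn in Figure~\ref{fig:thmkey}.

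It then remains to show $[\alpha,\beta]\subset\Cov_P(Q[s_j,t_j],\Delta)$ whenever the pair associated to $[\alpha,\beta]$ is $(s_j,t_j)$. For this I would use the conclusion of Lemma~\ref{lem:key} with the extreme admissible choice $\widehat b=s_j=s^*$ and $\widehat d=t_j=t^*$, obtaining a subcurve $P[\widehat\alpha,\widehat\beta]$ of $P$ with $[\alpha,\beta]\subset[\widehat\alpha,\widehat\beta]$ and $\df(P[\widehat\alpha,\widehat\beta],Q[s_j,t_j])\le\Delta$. Since $\widehat\alpha\le\alpha\le\beta\le\widehat\beta$, the interval $[\widehat\alpha,\widehat\beta]$ is one of those appearing in the union that defines $\Cov_P(Q[s_j,t_j],\Delta)$, so $[\alpha,\beta]\subset[\widehat\alpha,\widehat\beta]\subset\Cov_P(Q[s_j,t_j],\Delta)$. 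Taking the union over all intervals $[\alpha,\beta]$ comprising $\Cov_P(Q[s,t],\Delta)$, organized by their class, yields $\Cov_P(Q[s,t],\Delta)\subset\bigcup_{j=1}^4\Cov_P(Q[s_j,t_j],\Delta)$.

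I expect the only real friction to be the degenerate bookkeeping rather than any new idea: one must decide the case where the relevant left or right $\Delta$-extremal point lies exactly on the line $y=s$ (resp.\ $y=t$), fold in the case where that point is non-unique so that the lemma commits to one of its two $\Delta$-extremal points, and keep track of the case $s>t$, where $Q[s,t]$ and the $Q[s_j,t_j]$ are reversed subcurves; in each of these one has to check that the ``between $b$ and $b^*$'' phrasing of Lemma~\ref{lem:key} still licenses the endpoint choice $\widehat b=s^*$ and that the induced $P[\widehat\alpha,\widehat\beta]$ remains a genuine forward subcurve. A minor point is that an empty class simply contributes no pair, so we indeed need at most four pairs and hence at most eight values.
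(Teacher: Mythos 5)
Your proposal is correct and follows essentially the same route as the paper: decompose $\Cov_P(Q[s,t],\Delta)$ into its defining intervals, snap each one via Lemma~\ref{lem:key}, and group the intervals into four classes according to where the snapped parameters land relative to $s$ and $t$, covering each class by a single snapped subcurve. The only cosmetic difference is that the paper collapses each class to one pair by taking the max/min of the $s_i^*,t_i^*$ over the class, using only the ``any value between $b$ and $b^*$'' flexibility in the lemma's statement, whereas you observe directly from the lemma's construction that all snapped values within a class coincide (being one of two globally defined coordinates); both arguments produce the same four pairs.
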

\begin{proof}
An illustration to this proof can be seen in Figure~\ref{fig:thmkey}.
Let $\mathcal{I}=\{[a_i,b_i]\subset[0,1]\mid\df(P[a_i,b_i],Q[s,t])\leq\Delta\}$ be the set of intervals defining $\Cov_P(Q[s,t],\Delta)$. Partition $\mathcal{I}$ into four sets, according to whether for $[a_i,b_i]$ the values $s_i^*$ and $t_i^*$ from Lemma \ref{lem:key} is above or below $s$ and above or below $t$. That is
\begin{multicols}{2}
\begin{align*}
    \mathcal{I}_{\leq,\leq}&=\{[a_i,b_i]\in\mathcal{I}\mid s_i^*\leq s,t_i^*\leq t\},\\
    \mathcal{I}_{>,\leq}&=\{[a_i,b_i]\in\mathcal{I}\mid s_i^*> s,t_i^*\leq t\},
\end{align*}

\begin{align*}
    \mathcal{I}_{\leq,>}&=\{[a_i,b_i]\in\mathcal{I}\mid s_i^*\leq s,t_i^*> t\},\\
    \mathcal{I}_{>,>}&=\{[a_i,b_i]\in\mathcal{I}\mid s_i^*> s,t_i^*> t\}.
\end{align*}
\end{multicols}

Now for $\mathcal{I}_{\leq,\leq}$ observe that the set of $s_i^*$ as well as the set of $t_i^*$ is finite, as there can be at most $O(n^2)$ many distinct left or right $\Delta$-extremal points. Thus 
\begin{multicols}{2}
\begin{align*}
s_1&=\max\{s_i^*\mid [a_i,b_i]\in\mathcal{I}_{\leq,\leq}\},
\end{align*}

\begin{align*}
t_1&=\max\{t_i^*\mid [a_i,b_i]\in\mathcal{I}_{\leq,\leq}\}.
\end{align*}
\end{multicols}
are well defined, and $s_1$ lies between $s$ and any $s_i^*$ in $\{s_i^*\mid [a_i,b_i]\in\mathcal{I}_{\leq,\leq}\}$ and $t_1$ lies between $t$ and any $t_i^*$ in $\{t_i^*\mid [a_i,b_i]\in\mathcal{I}_{\leq,\leq}\}$. Thus by Lemma \ref{lem:key} There are values $\widehat{a_i}$ and $\widehat{b_i}$ for every $[a_i,b_i]\in\mathcal{I}_{\leq,\leq}$, such that $\df(P[\widehat{a_i},\widehat{b_i}],Q[s_1,t_1])\leq\Delta$, and thus 
\[\bigcup_{[a_i,b_i]\in\mathcal{I}_{\leq,\leq}}[a_i,b_i]\subset\Cov_Q(P[s_1,t_1],\Delta).\]
Similarly we can identify
\begin{multicols}{2}
\begin{align*}
s_2&=\min\{s_i^*\mid [a_i,b_i]\in\mathcal{I}_{>,\leq}\},\\
s_3&=\max\{s_i^*\mid [a_i,b_i]\in\mathcal{I}_{\leq,>}\},\\
s_4&=\min\{s_i^*\mid [a_i,b_i]\in\mathcal{I}_{>,>}\},
\end{align*}%

\begin{align*}
t_2&=\max\{t_i^*\mid [a_i,b_i]\in\mathcal{I}_{>,\leq}\},\\
t_3&=\min\{t_i^*\mid [a_i,b_i]\in\mathcal{I}_{\leq,>}\},\\
t_4&=\min\{t_i^*\mid [a_i,b_i]\in\mathcal{I}_{>,>}\}.
\end{align*}
\end{multicols}
As $\mathcal{I}_{\leq,\leq}$, $\mathcal{I}_{>,\leq}$,$\mathcal{I}_{\leq,>}$ and $\mathcal{I}_{>,>}$ partition $\mathcal{I}$, it follows that
\begin{align*}
    \Cov_P(Q[s,t],\Delta)&=\bigcup_{[a_i,b_i]\in\mathcal{I}}[a_i,b_i]\\
    &\subset\bigcup_{i=1}^4\Cov_P(Q[s_i,t_i],\Delta),
\end{align*}
proving the claim.
\end{proof}

\begin{theorem}\label{thm:main2}
Let a polygonal curve $P$ in $\bR^d$ of complexity $n$ and values $\Delta>0$ and $l\in\bN$ be given. A set of curves $\mathcal{S}\subset\bX_l^d$ of size $O(n^3l)$ can be computed in time $O(n^3l^2)$ with the following property. If there is a set $\mathcal{C}\subset\bX_l^d$ of size $k$ such that $\bigcup_{C\in\mathcal{C}}C=[0,1]$, then there is a set $\mathcal{C}_\mathcal{S}\subset \mathcal{S}$ of size $12k$ with $\bigcup_{C\in\mathcal{C}_\mathcal{S}}C=[0,1]$.
\end{theorem}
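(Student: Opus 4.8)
The plan is to build $\mathcal{S}$ by composing two reductions already available in the excerpt. First I would invoke Lemma~\ref{lem:simplification} to compute a $\Delta$-good simplification $S$ of $P$ (of complexity $m\le n$) in $O(n\log^2 n)$ time, and use Theorem~\ref{thm:simp} to turn any covering set $\mathcal{C}\subset\bX_l^d$ of size $k$ (with $\bigcup_{C\in\mathcal{C}}\Cov_P(C,\Delta)=[0,1]$) into a set $\mathcal{C}_S$ of $3k$ subcurves of $S$, each of complexity at most $l$, with $\bigcup_{C\in\mathcal{C}_S}\Cov_P(C,11\Delta)=[0,1]$. Second, I would apply Theorem~\ref{thm:main1} to each center $S[s,t]\in\mathcal{C}_S$, with the theorem's two curves taken to be $P$ and $S$ and its radius taken to be $11\Delta$: this replaces $S[s,t]$ by at most four free subcurves $S[s_i,t_i]$ whose endpoints are $y$-coordinates of $11\Delta$-extremal points of $P$ and $S$ and whose $11\Delta$-coverages together contain $\Cov_P(S[s,t],11\Delta)$. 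Guided by this, I would define $\mathcal{S}$ to be the set of all free subcurves $S[s',t']$ of $S$ for which $s'$ and $t'$ are $y$-coordinates of $11\Delta$-extremal points of $P$ and $S$ and $|S[s',t']|\le l$; this set depends only on $P$, $\Delta$ and $l$, as required.

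To bound the size, I would use the observation after the definition of $\Delta$-extremal points that $P$ and $S$ have $O(nm)=O(n^2)$ cells and hence $O(n^2)$ extremal points, computable in $O(n^2)$ time, together with the fact that for a fixed edge of $S$ only the $n$ cells containing that edge contribute extremal points whose $y$-coordinate lies on it, giving $O(n)$ such $y$-coordinates per edge of $S$. Since a free subcurve of $S$ of complexity at most $l$ spans at most $l$ consecutive edges of $S$, for each of the $m$ edges of $S$ there are $O(n)$ choices for $s'$ on it and then $O(nl)$ choices for $t'$, so $|\mathcal{S}|=O(mn^2l)=O(n^3l)$. Each curve of $\mathcal{S}$ has complexity $O(l)$ and can be written down in $O(l)$ time, for a total of $O(n\log^2 n)+O(n^2)+O(n^3l)\cdot O(l)=O(n^3l^2)$.

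For correctness I would take $\mathcal{C}$ and $\mathcal{C}_S$ as above, fix a center $S[s,t]\in\mathcal{C}_S$, and show that each of its at most four replacements $S[s_i,t_i]$ from Theorem~\ref{thm:main1} lies in $\mathcal{S}$ — that is, $|S[s_i,t_i]|\le l$. This is precisely the point not addressed by the statement of Theorem~\ref{thm:main1}, so I would reopen its proof: there each $s_i$ is a maximum or minimum of values $s_j^*$ produced by Lemma~\ref{lem:key} applied with second curve $S[s,t]$, and in the proof of Lemma~\ref{lem:key} such a value lies between $s$ and the $y$-coordinate of the left $11\Delta$-extremal point of the cell containing the start point of the center; since the $y$-extent of that cell is exactly the edge of $S$ containing $S(s)$, every $s_j^*$ — and hence $s_i$ — lies on that edge. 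The symmetric argument with the right $11\Delta$-extremal point of the cell of the endpoint places $t_i$ on the edge of $S$ containing $S(t)$. Therefore $S[s_i,t_i]$ spans precisely the edges of $S$ that $S[s,t]$ spans, so $|S[s_i,t_i]|\le|S[s,t]|\le l$. Collecting the replacements over the $3k$ centers of $\mathcal{C}_S$ then yields $\mathcal{C}_\mathcal{S}\subset\mathcal{S}$ of size at most $12k$ with $\bigcup_{C\in\mathcal{C}_\mathcal{S}}\Cov_P(C,11\Delta)=[0,1]$.

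I expect the complexity-preservation step to be the main obstacle: Theorem~\ref{thm:main1} and Lemma~\ref{lem:key} are phrased entirely in terms of coverage and are silent about the complexity of the curves they produce, so they cannot be used as black boxes and one must trace through their proofs to verify that the modified endpoints never leave the edges of $S$ that contained the original ones. A secondary, purely bookkeeping issue is orientation: $S[s_i,t_i]$ may be a reversed subcurve of $S$, which is why $\mathcal{S}$ is built from free subcurves; including both orientations at most doubles $|\mathcal{S}|$ and affects none of the stated bounds.
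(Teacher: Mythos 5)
Your proposal is correct and follows essentially the same route as the paper: compute a $\Delta$-good simplification, take all subcurves of complexity at most $l$ whose endpoints are $y$-coordinates of $11\Delta$-extremal points of the free space of $S$ and $P$, bound the size by $O(n^2)\cdot O(nl)$, and derive the factor $12k$ by composing Theorem~\ref{thm:simp} (factor $3$) with Theorem~\ref{thm:main1} (factor $4$). The one place you go beyond the paper --- tracing through Lemma~\ref{lem:key} to check that the shifted endpoints $s_i^*,t_i^*$ stay on the same edges of $S$, so the replacement curves still have complexity at most $l$ --- is a genuine gap-filling of a step the paper dismisses as an ``immediate consequence,'' and your argument for it is sound.
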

\begin{proof}
    First compute a $\Delta$-good simplification $S$ of $P$ in time $O(n\log^2 n)$, which is possible by Lemma \ref{lem:simplification}. Next compute the $11\Delta$-free space and subsequently the left and right $11\Delta$-extremal points of $S$ and $P$ in time $O(n^2)$. For every left $11\Delta$-extremal point $s$ and every right $11\Delta$-extremal point $t$ such that $|S[s,t]|\leq l$ store $S[s,t]$ in $\mathcal{S}$. This takes $O(n^2)\cdot O(ln)\cdot O(l)$ time.
    The correctness of the property is an immediate consequence of Theorem \ref{thm:simp} and Theorem \ref{thm:main1}.
\end{proof}

\subsection{Discretization of the ground set}\label{sec:discretization}

So far we have presented a discretization of the set system to our original \textsc{SetCover} instance, which preserves the optimal solution up to a bounded approximation factor. This allows us to apply \textsc{SetCover} solving techniques such as a greedy algorithm \cite{slavik1996tight}, which chooses sets incrementally and in each step chooses the set which increases the accumulated coverage of the ground set the most. In the following, we argue that this can be done efficiently.

\begin{lemma}\label{lem:discrete}
    Let $P$ and $Q$ be two polygonal curves of complexity at most $n$ and a parameter $\Delta>0$ be given. Then $\Cov_P(Q,\Delta)$ can be written as the union of $O(n)$ disjoint closed intervals.
\end{lemma}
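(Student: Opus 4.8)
The plan is to analyze the structure of $\Cov_P(Q,\Delta)$ directly through the $\Delta$-free space diagram $\dfree{}{}(P,Q)$. Recall that $\Cov_P(Q,\Delta)$ is the set of parameters $x \in [0,1]$ such that $x$ lies in some interval $[s,t]$ with $\df(P[s,t],Q) \leq \Delta$. By the Alt--Godau characterization, $\df(P[s,t],Q) \leq \Delta$ holds if and only if there is a bi-monotone path in $\dfree{}{}(P,Q)$ from $(s,0)$ to $(t,1)$. So a parameter $x$ is covered exactly when there is a bi-monotone path from the bottom edge $\{y=0\}$ of the free space to the top edge $\{y=1\}$ that passes through the vertical line $\{X = x\}$ at height spanned by $[s,t] \ni x$ --- equivalently, $x$ is covered iff there exist $s \le x \le t$ with $(s,0)$ reachable-monotonically-to $(t,1)$ through the free space.

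First I would reformulate this as a reachability statement. For each point $p$ on the bottom edge $\{y=0\}\cap\dfree{}{}(P,Q)$, let $R(p) \subseteq [0,1]$ be the set of $X$-coordinates $t$ such that $(t,1)$ is reachable from $p$ by a bi-monotone (monotone increasing in $Y$, monotone in $X$) path inside the free space; since the free space within each cell is convex and the reachable region from a point is known to be an interval on each horizontal line (standard free-space propagation), $R(p)$ is a single interval $[\ell(p), r(p)]$ (possibly empty). Then $x$ is covered iff there is a bottom point $p = (s,0)$ with $s \le x$ and $x \in [s, r(p)]$ after also requiring $R(p) \neq \emptyset$; combining, $\Cov_P(Q,\Delta) = \bigcup_{p}\, [\,s(p),\, r(p)\,]$ over all bottom points $p=(s(p),0)$ with nonempty reachability, and this is a union of intervals on $[0,1]$.

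The key step is to bound the number of maximal intervals. I would argue that as the bottom point $p=(s,0)$ moves along the (at most $O(n)$) free-space intervals on the bottom edge, both endpoints $s$ and $r(p)$ vary monotonically and piecewise-nicely: $s$ is just the $X$-coordinate, and $r(p)$ is non-decreasing in $s$ as long as the reachable set stays nonempty (a later start can only reach later, by monotonicity of the path constraint). The bottom edge decomposes into $O(n)$ sub-intervals according to which vertical cell-column it lies in, and reachability to the top can only "switch on or off" $O(n)$ times as $s$ sweeps — once per column of the grid, since each column contributes one ellipse-arc constraint. Within each contiguous block of bottom parameters where reachability is nonempty, the covered set $\bigcup_p [s, r(p)]$ is itself a single interval (because $s$ sweeps continuously and $r(p)$ is monotone, so the union is $[\inf s, \sup r(p)]$ over that block). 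Hence the total number of disjoint closed intervals in $\Cov_P(Q,\Delta)$ is $O(n)$.

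The main obstacle I anticipate is making precise and correct the claim that the reachable set of top-$X$-coordinates from a moving bottom point is an interval that behaves monotonically, and carefully counting the "on/off" transitions — in particular handling the free subcurve subtlety (paths could in principle run in either monotone direction) and the boundary cases where $Q$ is matched against a subcurve that degenerates. I would handle this by appealing to the standard free-space propagation machinery (the reachable region on each horizontal grid line is an interval, monotone propagation preserves this) exactly as in the Alt--Godau algorithm, and by observing that since here $Q$ is a full curve (parametrized over all of $[0,1]$) while only $P$ is restricted to a subcurve, the path is forced to be monotone increasing in the $Y$-direction, which removes the free-subcurve direction ambiguity and keeps the argument one-dimensional in the relevant sense.
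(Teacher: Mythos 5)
Your proof is correct in substance and arrives at the right bound, but it takes a noticeably heavier route than the paper. The paper's argument is a two-line observation: if $\df(P[a,b],Q)\leq\Delta$, the monotone path witnessing this starts at $(a,0)$ in some cell of the bottom row of the free space, and one may prepend the horizontal segment from the leftmost free point $(a^*,0)$ of that cell (and symmetrically append a segment to a rightmost free point $(b^*,1)$ of a top-row cell); hence every covered interval extends to one whose left endpoint is one of at most $n$ candidates, so the union has at most $n$ maximal components. Your sweep over starting points $(s,0)$, charging each maximal component of the coverage to a connected block of ``reachable'' starting positions (at most one block per bottom-row cell), is essentially the same charging scheme in disguise, but it obliges you to establish auxiliary facts that the paper's version sidesteps.

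Two caveats on your write-up. First, the claim that $R(p)$ is a single interval is false in general: the forward-reachable set on the top edge is an interval per top-row cell, but across cells it may be disconnected. Fortunately you only ever use $r(p)=\sup R(p)$, and $\bigcup_{t\in R(p)}[s,t]=[s,\sup R(p)]$ holds regardless, so nothing downstream breaks. Second, the crux of your count --- that reachability switches on or off at most once per column --- is justified only by ``each column contributes one ellipse-arc constraint,'' which is not a proof. The actual reason is precisely the paper's trick: if $(s_3,0)$ reaches the top and $s_2<s_3$ lies in the same free interval of the bottom edge, then $(s_2,0)$ reaches the top by first walking right along that free interval to $(s_3,0)$; hence the ``on'' set is a leftward-closed sub-interval of each of the at most $n$ bottom free intervals. (Relatedly, the monotonicity of $r(p)$ that you invoke needs a path-crossing argument rather than ``a later start can only reach later,'' but you do not actually need it: the union over a connected block is connected simply because each $[s(p),r(p)]$ contains its own left endpoint $s(p)$, and these left endpoints form an interval.) With these repairs the proof is complete.
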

\begin{proof}
    Let $P[a,b]$ be a subcurve of $P$, such that $\df(P[a,b],Q)\leq\Delta$. Then, $(a,0)$ is contained in the $\Delta$-free space of some cell. Let $(a^*,0)$ be the leftmost point in the cell containing $(a,0)$. We can concatenate $\overline{a^*\,a}$ to any monotone path starting in $a$. Thus, any path starting in the cell containing $(a,0)$ may as well start in $(a^*,0)$. Similarly, any path may as well end in a rightmost point $(b^*,1)$ of some cell. Thus, $\Cov_Q(P,\Delta)$ consists of a union of intervals,  each interval starting and ending in one of at most $n$ points.
\end{proof}

\begin{theorem}
    Let $P$ be a polygonal curve, and let $\Delta>0$ and $l\in\bN$ be given. There is an algorithm, which computes a $O(\log(n))$-approximation for the \textsc{SetCover} instance \mbox{$([0,1],\{\Cov_P(S,11\Delta)\mid S\in\mathcal{S}\})$}, where $\mathcal{S}$ is as in Theorem \ref{thm:main2}. 
\end{theorem}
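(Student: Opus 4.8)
The plan is to replace the infinite ground set $[0,1]$ by an equivalent finite one and then run the classical greedy \textsc{SetCover} algorithm, whose guarantee of returning a cover of size at most $(1+\ln N)$ times the optimum on a ground set of $N$ elements \cite{slavik1996tight} immediately gives the claim once we check that $\ln N = O(\log n)$. So the entire argument is a discretization of the ground set together with a size count.

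To build the finite instance I would first invoke Lemma~\ref{lem:discrete}: for every $S\in\mathcal{S}$ the set $\Cov_P(S,11\Delta)$ is a union of $O(n)$ disjoint closed intervals, and the proof of that lemma exhibits their $O(n)$ endpoints as coordinates of leftmost/rightmost points of cells of the $11\Delta$-free space of $P$ and $S$, hence they are computable. Since $|\mathcal{S}|=O(n^3l)$ by Theorem~\ref{thm:main2}, collecting all these endpoints over all $S\in\mathcal{S}$ and sorting them yields $0=x_0<\dots<x_M=1$ with $M=O(n^4l)$ and a partition of $[0,1]$ into $O(n^4l)$ \emph{atoms}: the open intervals $(x_{i-1},x_i)$ together with the singletons $\{x_i\}$. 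The key observation is that, by construction of the breakpoints, every $\Cov_P(S,11\Delta)$ is a union of atoms; therefore a subfamily $\mathcal{C}_\mathcal{S}\subseteq\mathcal{S}$ satisfies $\bigcup_{S\in\mathcal{C}_\mathcal{S}}\Cov_P(S,11\Delta)=[0,1]$ precisely when the corresponding sets cover every atom. Hence the finite instance $(\mathcal{X},\{\Cov_P(S,11\Delta)\cap\mathcal{X}\mid S\in\mathcal{S}\})$, with $\mathcal{X}$ the set of atoms, has the same optimum as the given one, and any solution to it is a solution to the given instance of the same size.

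Now I would run greedy on this finite instance: maintain the collection of not-yet-covered atoms and in each round add the set that covers the most of them, a step that is effective because each candidate set is an explicitly stored union of $O(n)$ intervals whose overlap with the current uncovered atoms is computed by a sweep. By the standard analysis \cite{slavik1996tight} the returned cover has size at most $(1+\ln|\mathcal{X}|)\cdot\mathrm{OPT}$. Finally, since every member of $\mathcal{S}$ is a subcurve of the $\Delta$-good simplification $S$ of $P$, which has complexity at most $n$, the constraint $|S[s,t]|\le l$ imposes nothing once $l\ge n$, so we may assume $l\le n$; then $|\mathcal{X}|=O(n^5)$ and $1+\ln|\mathcal{X}|=O(\log n)$, the desired approximation factor. (For the running time one notes that greedy runs for at most $O(\mathrm{OPT}\cdot\log n)$ rounds, each touching all $O(n^3l)$ sets, which yields the $O(kln^4\log n)$-type bound stated in the introduction.) I expect the only real obstacle to be the bookkeeping in the discretization: one must verify that the breakpoints supplied by Lemma~\ref{lem:discrete} simultaneously refine \emph{every} set of $\mathcal{S}$, so that ``covering $[0,1]$'' and ``covering every atom'' agree exactly — including at shared endpoints of adjacent closed intervals — and that their total number remains polynomial in $n$; once this is in place, the approximation guarantee is just the textbook greedy \textsc{SetCover} bound.
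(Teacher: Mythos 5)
Your proposal is correct and follows essentially the same route as the paper: apply Lemma~\ref{lem:discrete} to see that each set is a union of $O(n)$ intervals, form the arrangement of all $O(n^3l)$ sets to obtain a finite ground set of $O(n^4l)$ atomic pieces on which every set is a union of atoms, and run the greedy algorithm to get the $O(\log(n^4l))=O(\log n)$ guarantee. Your extra care with the atoms (open intervals plus singleton breakpoints) and the remark that one may assume $l\le n$ to absorb the $\log l$ term are welcome details that the paper's proof leaves implicit.
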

\begin{proof}
An immediate consequence of Lemma \ref{lem:discrete} is that the arrangement of all sets contained in some set of the set family consists of at most $O(n^4l)$ intervals $\mathcal{I}$. Interpreting these intervals as discrete objects, we now get a transformed \textsc{SetCover} instance \[\left(\mathcal{I},\left\{\bigcup_{I\in\mathcal{I},I\subset\Cov_P(S,11\Delta)}I\middle| S\in\mathcal{S}\right\}\right).\]
Applying the greedy \textsc{SetCover} algorithm to this yields a $O(\log(n^4l))=O(\log n)$ approximation algorithm in the size of an optimal solution.
\end{proof}

\begin{corollary}
    Let a polygonal curve $P$ of complexity $n$ and values $\Delta>0$ and $l\in\bN$ be given. There exists an algorithm that computes a set of curves $\mathcal{C}$, each of complexity at most $l$ such that $\bigcup_{C\in\mathcal{C}}\Cov_P(C,11\Delta)=[0,1]$ in $\Tilde{O}(l^2n^4 + kln^4)$ time. Further $|\mathcal{C}|\in O(\log(n)k)$, where $k$ is the smallest cardinality of any set $\mathcal{C}^*\subset\bX^d_l$ such that $\bigcup_{C\in\mathcal{C}}\Cov_P(C,11\Delta)$.
\end{corollary}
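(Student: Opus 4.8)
The plan is to package the algorithm from the three ingredients built above---the structured candidate family $\mathcal{S}$ of Theorem~\ref{thm:main2}, the interval description of coverages from Lemma~\ref{lem:discrete}, and the deterministic greedy \textsc{SetCover} algorithm used in the preceding theorem---and then to carry out the running-time analysis. Throughout we may assume $l\le n$: every subcurve of a $\Delta$-good simplification of $P$ has complexity at most $n$, so for larger $l$ the family $\mathcal{S}$ is exactly the set of all such subcurves and every bound below only improves.

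\textbf{Constructing the explicit set system.} First apply Theorem~\ref{thm:main2} in $O(l^2 n^3)$ time to obtain $\mathcal{S}\subset\bX_l^d$ of size $O(l n^3)$ with the property that, whenever the original problem admits a cover of size $k$ with radius $\Delta$, the family $\{\Cov_P(S,11\Delta)\mid S\in\mathcal{S}\}$ admits a cover of size at most $12k$; this already fixes the approximation factor contributed by the discretization (via Theorems~\ref{thm:simp} and~\ref{thm:main1}). Then, for each $S\in\mathcal{S}$, compute the $11\Delta$-free space of $P$ against $S$ and the reachability information needed to read off $\Cov_P(S,11\Delta)$, which by Lemma~\ref{lem:discrete} is a union of $O(n)$ closed intervals. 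The free space diagram of $P$ (complexity $n$) and $S$ (complexity at most $l$) has $O(nl)$ cells, so this costs $\Tilde{O}(nl)$ per curve and $\Tilde{O}(l^2 n^4)$ over all of $\mathcal{S}$. Sorting the $O(l n^4)$ interval endpoints yields the atomic arrangement $\mathcal{I}$ of size $O(l n^4)$, turning the instance into the finite \textsc{SetCover} instance $(\mathcal{I},\{\bigcup_{I\in\mathcal{I},\,I\subset\Cov_P(S,11\Delta)}I\mid S\in\mathcal{S}\})$ of the preceding theorem.

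\textbf{Greedy solve and its running time.} Run greedy \textsc{SetCover} on this instance. Any subfamily of $\mathcal{S}$ whose coverages cover $[0,1]$ covers every atomic interval, so the optimum of the discretized instance is at most $12k$, and greedy returns $\mathcal{C}\subset\mathcal{S}$ with $|\mathcal{C}|\le 12k\cdot H_{|\mathcal{I}|}=O(k\log(l n^4))=O(k\log n)$; since $\mathcal{I}$ tiles $[0,1]$, this $\mathcal{C}$ satisfies $\bigcup_{C\in\mathcal{C}}\Cov_P(C,11\Delta)=[0,1]$, and each $C\in\mathcal{C}$ has complexity at most $l$. For the time bound I would maintain a segment tree over the $O(l n^4)$ atomic intervals, each leaf flagged covered/uncovered, supporting ``count uncovered leaves in a query range'' and ``mark a range covered'' in $O(\log n)$ time with lazy propagation. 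Evaluating the marginal gain of one candidate $S$ then amounts to $O(n)$ range-count queries (one per maximal interval of $\Cov_P(S,11\Delta)$), i.e.\ $\Tilde{O}(n)$ time; a full round scans all $O(l n^3)$ candidates in $\Tilde{O}(l n^4)$ time, and marking the chosen set's intervals covered costs a further $\Tilde{O}(n)$. Over the $O(k\log n)$ rounds this is $\Tilde{O}(k l n^4)$, and together with the $\Tilde{O}(l^2 n^4)$ preprocessing we obtain the claimed $\Tilde{O}(l^2 n^4 + k l n^4)$.

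The individual steps are routine given the earlier theorems, so I do not expect a genuine obstacle. The one point that really needs care---and where a naive implementation fails---is keeping each greedy round at $\Tilde{O}(l n^4)$: the number of (atomic interval, candidate) incidences can be as large as $\Theta(l^2 n^7)$, so one cannot afford to re-merge the interval lists per candidate, and the segment-tree bookkeeping above is what keeps the per-round cost at $\Tilde{O}(l n^4)$ and hence the total linear in $k$. Beyond that, one only has to check that composing the factor $12$ from Theorem~\ref{thm:main2} with the $H_{|\mathcal{I}|}$ factor of greedy still collapses to $O(k\log n)$, which holds because $|\mathcal{I}|$ is polynomial in $n$ (using $l\le n$).
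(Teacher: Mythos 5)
Your proposal is correct and follows essentially the same route as the paper: build the candidate family $\mathcal{S}$ from Theorem~\ref{thm:main2}, extract each $\Cov_P(S,11\Delta)$ as $O(n)$ intervals from an $O(ln)$-size portion of the free space, discretize into the arrangement of $O(ln^4)$ atomic intervals, and run greedy \textsc{SetCover} with a data structure supporting $\Tilde{O}(n)$ marginal-gain evaluation per candidate. The only difference is an implementation detail --- you use a segment tree with lazy propagation over the atomic intervals, while the paper uses an interval tree over the partial solution's coverage augmented with prefix counts of uncovered arrangement intervals --- and both yield the claimed $\Tilde{O}(l^2n^4 + kln^4)$ bound and the $O(k\log n)$ solution size.
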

\begin{proof}
    It is left to show that for any subcurve $C$ of $\mathcal{S}$ of complexity $l$ we can compute $\Cov_{\mathcal{P}}(C,11\Delta)$ in $O(ln)$ time given the $11\Delta$-Freespace of $\mathcal{S}$ and $P$. Further we need to show that for a given partial solution $S^*$ of size $O(\log (n)k)$ and a given subcurve $C$ of $\mathcal{S}$ we can compute the covered fraction from the intervals of the arrangement in $\Cov_P(C,11\Delta)\setminus\bigcup_{S\in S^*}\Cov_P(S,11\Delta)$ to update $C$ for the greedy \textsc{SetCover} algorithm in $\Tilde{O}(n)$.\\
    The first part is straightforward, as the part of the $11\Delta$-Freespace we have to consider is of size $O(ln)$. We need to traverse it once from bottom left to top right to compute the $11\Delta$-Coverage of such a subcurve. For the second part, observe that $k\leq n$, as a possible solution consists of all the edges of $P$. Thus, as any $11\Delta$-coverage consists of at most $O(n)$ intervals the $11\Delta$-Coverage of any partial solution consists of at most $O(n^2)$ disjoint intervals, which we store in an interval-tree. Any interval in this interval-tree also stores the number of intervals of the arrangement that are not covered and lie to the left of its left boundary. With this information we can compute the number of intervals of the arrangement in $\Cov_P(C,11\Delta)\setminus\bigcup_{S\in S^*}\Cov_P(S,11\Delta)$ in $O(n\log(n^2))=O(n\log n)$ time. Updating the information in the interval-tree upon increasing the partial solution can be done in $\Tilde{O}(n^3)$ time and needs to be done at most $O(k\log n)$ times. 
\end{proof}

\section{Experiments}

All experiments were conducted on a Linux system with 16GB of memory with an Intel \texttt{i5-9600} CPU, a decent CPU with $6$ cores, but far from the fastest hardware available.

\subsection{Implementation Details}
Subsequent to the first publication of our results on arxiv van der Hoog et al.~\cite{vanderhoog2024fasterdeterministicsubtrajectoryclustering} observed that it suffices to consider candidates defined by pairs $(a,b)$ of extremal coordinates such that $2^i$ other extremal coordinates lie in $[a,b]$ for some $i\in\bN$. This reduces the cardinality of the candidate set to $O(n^2\log n)$ transforming each center curve in any solution into at most two new center curves of similar complexity. This candidate set defines the set cover instance which is constructed and then solved in our implementation. The approximation guarantees of $11\Delta$ in the distance threshold and $O(k\log(n))$ in the solution size stay unchanged, while the theoretical running time improves to $\tilde{O}(kln^3)$.

Our \texttt{C++}-implementation can be found at~\cite{anonymousGithub}. The algorithm is given a set $\{P_1,\ldots,P_m\}$ of curves and three parameters $\Delta_\textit{simp}$ and $\Delta_\textit{free}$ and $l$. It first computes simplifications $S_i$ of $P_i$ for all $i$ with parameter $\Delta_\textit{simp}$, and then computes the $\Delta_\textit{free}$-free space of $S_i$ and $S_j$ as well as their extremal points for all pairs $(i,j)\in\{1,\ldots,m\}^2$. Next it computes all pairs of extremal coordinates that $(i)$ lie on the same curve, $(ii)$ the resulting subcurve has complexity at most $l$ and $(iii)$ there is a power of $2$ of other extremal coordinates in the interval between them. For every such pair of extremal coordinates $(a,b)$ on curve $S_i$ we compute the $\Delta_\textit{free}$-coverage of $S_i[a,b]$ via the computed $\Delta_\textit{free}$-spaces of $S_i$ with any other $S_j$. This $\Delta_\textit{free}$-coverage is a subset of $\{1,\ldots,m\}\times[0,1]$ and defines the set cover instance. This set cover instance we solve via a greedy set cover algorithm that iteratively picks and adds the candidate which maximizes the arc-length (computed on each $S_i$) of the additional coverage. This step is repeated until $\{1,\ldots,m\}\times[0,1]$ is covered. 

We recursively pick the candidate maximizing the arc-length---instead of the number of intervals of the induced arrangement of coverages---of the added $\Delta$-coverage. This follows the $\Delta$-coverage maximization discussion in \cite{Brüning2022Faster} and allows us to stop the greedy algorithm after a small number of rounds and still have a partial solution that covers a large fraction of the input. Further, we introduced the two parameters $\Delta_{\textit{simp}}$ and $\Delta_{\textit{free}}$ to test the stability of the threshold parameter $\Delta$ in both the simplification and free-space computation step. Given some $\Delta$, setting $\Delta_{\textit{simp}}=3\Delta$ and $\Delta_{\textit{free}}=8\Delta$ reflects the theoretical results. 

\subsection{Ocean Drifters}
\begin{figure}
    \centering	
    \subfloat[\centering varying $\Delta_\textit{simp}$ and $\Delta_\textit{free}$ with $3:8$ ratio.\label{subfig:1}]{\scalebox{0.5}{\begin{tikzpicture}
\begin{axis}[
    xlabel={Points},
    ylabel={Time in $[s]$},
    legend pos=north west,
    grid=major,
    width=\textwidth,
    ymode=log,
    xmode=log,
    height=0.6\textwidth,
    style={ultra thick},
]
\addplot coordinates {
    (10000.0, 0.27719900989905)
    (20000.0, 0.6316489949822426)
    (30000.0, 1.188493647612631)
    (40000.0, 1.197291468270123)
    (50000.0, 1.2472961102612317)
    (60000.0, 1.5211414876393974)
    (70000.0, 1.621997483074665)
    (80000.0, 1.887173877563328)
    (90000.0, 2.11663400195539)
    (100000.0, 2.303281505592168)
    (200000.0, 5.272964080795646)
    (300000.0, 8.847055089194328)
    (400000.0, 13.526936966460198)
    (500000.0, 18.903752747923136)
    (600000.0, 26.31807895982638)
    (700000.0, 34.53292950661853)
    (800000.0, 43.89596584625542)
    (900000.0, 53.70135052455589)
    (1000000.0, 64.97437348077074)
    (2000000.0, 432.7205221820623)
};
\addlegendentry{$\Delta_{simp} = 30000, \Delta_{free} = 80000, l = 3$}

\addplot coordinates {
    (10000.0, 0.445091408008011)
    (20000.0, 1.423659845997463)
    (30000.0, 1.826881044995389)
    (40000.0, 1.8933151230157816)
    (50000.0, 2.1811309380136663)
    (60000.0, 2.834360322987777)
    (70000.0, 2.912917552006548)
    (80000.0, 3.283769465997466)
    (90000.0, 3.694548112995108)
    (100000.0, 3.945654797993484)
    (200000.0, 8.230075030005537)
    (300000.0, 12.39505133999046)
    (400000.0, 17.151918208997813)
    (500000.0, 22.09512110601645)
    (600000.0, 28.270884515994112)
    (700000.0, 33.25838872800523)
    (800000.0, 39.72566475901112)
    (900000.0, 47.33260071399855)
    (1000000.0, 55.22182434500428)
    (2000000.0, 180.5534527479904)
};
\addlegendentry{$\Delta_{simp} = 60000, \Delta_{free} = 160000, l = 3$}

\addplot coordinates {
    (10000.0, 0.8452739970089169)
    (20000.0, 2.2322320279781707)
    (30000.0, 2.3103071320074378)
    (40000.0, 2.6888219279935583)
    (50000.0, 3.369186126001296)
    (60000.0, 4.266016601002775)
    (70000.0, 4.657629793015076)
    (80000.0, 5.036690647000796)
    (90000.0, 5.711281853000401)
    (100000.0, 5.991882975999033)
    (200000.0, 12.578254264008136)
    (300000.0, 18.82653490299708)
    (400000.0, 24.53256264899392)
    (500000.0, 30.581985822005663)
    (600000.0, 37.93250536100823)
    (700000.0, 44.36676167300902)
    (800000.0, 51.76760081699467)
    (900000.0, 60.47053217800567)
    (1000000.0, 68.33913648399175)
    (2000000.0, 164.46900820099108)
};
\addlegendentry{$\Delta_{simp} = 90000, \Delta_{free} = 240000, l = 3$}

\addplot coordinates {
    (10000.0, 1.12916148500517)
    (20000.0, 2.5897573309921427)
    (30000.0, 2.7477634560054867)
    (40000.0, 3.2908069170080125)
    (50000.0, 4.658696018988849)
    (60000.0, 5.209048511009314)
    (70000.0, 5.751090222009225)
    (80000.0, 6.639542246994097)
    (90000.0, 7.905111605010461)
    (100000.0, 7.985967337997863)
    (200000.0, 16.268287960992893)
    (300000.0, 23.82462491300248)
    (400000.0, 31.20564111499698)
    (500000.0, 39.03850103600416)
    (600000.0, 47.97652842701064)
    (700000.0, 55.645574028021656)
    (800000.0, 65.55798476099153)
    (900000.0, 73.57997708501352)
    (1000000.0, 82.55346960401221)
    (2000000.0, 185.80304687400348)
};
\addlegendentry{$\Delta_{simp} = 120000, \Delta_{free} = 320000, l = 3$}

\addplot[
    domain=10000:2000000, 
    samples=100,     
    dashed,          
] {0.000005*x^1.2};
\addlegendentry{$n^{1.2}$}

\end{axis}
\end{tikzpicture}

    \subfloat[\centering varying $l$.\label{subfig:2}]{\scalebox{0.5}{\begin{tikzpicture}
\begin{axis}[
    xlabel={Points},
    ylabel={Time in $[s]$},
    legend pos=north west,
    grid=major,
    width=\textwidth,
    ymode=log,
    xmode=log,
    height=0.6\textwidth,
    style={ultra thick},
]
\addplot coordinates {
    (10000.0, 0.4407824929949129)
    (20000.0, 1.4143317289999686)
    (30000.0, 1.8724696459976256)
    (40000.0, 1.9828219079936389)
    (50000.0, 2.2229771920101484)
    (60000.0, 3.027941534019192)
    (70000.0, 3.041258722994826)
    (80000.0, 3.212228035001317)
    (90000.0, 3.853493583999807)
    (100000.0, 4.067440043989336)
    (200000.0, 8.897705252995365)
    (300000.0, 13.166170385011355)
    (400000.0, 18.236916852998547)
    (500000.0, 23.386700301998644)
    (600000.0, 29.549029109999537)
    (700000.0, 35.029031757992925)
    (800000.0, 41.9002918160113)
    (900000.0, 49.63445252101519)
    (1000000.0, 57.733437086993945)
    (2000000.0, 180.43967428200995)
};
\addlegendentry{$\Delta_{simp} = 60000, \Delta_{free} = 160000, l = 1$}

\addplot coordinates {
    (10000.0, 0.445091408008011)
    (20000.0, 1.423659845997463)
    (30000.0, 1.826881044995389)
    (40000.0, 1.8933151230157816)
    (50000.0, 2.1811309380136663)
    (60000.0, 2.834360322987777)
    (70000.0, 2.912917552006548)
    (80000.0, 3.283769465997466)
    (90000.0, 3.694548112995108)
    (100000.0, 3.945654797993484)
    (200000.0, 8.230075030005537)
    (300000.0, 12.39505133999046)
    (400000.0, 17.151918208997813)
    (500000.0, 22.09512110601645)
    (600000.0, 28.270884515994112)
    (700000.0, 33.25838872800523)
    (800000.0, 39.72566475901112)
    (900000.0, 47.33260071399855)
    (1000000.0, 55.22182434500428)
    (2000000.0, 180.5534527479904)
};
\addlegendentry{$\Delta_{simp} = 60000, \Delta_{free} = 160000, l = 3$}

\addplot coordinates {
    (10000.0, 0.437488817013218)
    (20000.0, 1.4072223669936648)
    (30000.0, 1.854022226005327)
    (40000.0, 1.9000229980010768)
    (50000.0, 2.1743654260062613)
    (60000.0, 3.026665106997825)
    (70000.0, 3.082566032986506)
    (80000.0, 3.1590875609981595)
    (90000.0, 3.755504553992069)
    (100000.0, 3.8862941330007743)
    (200000.0, 8.161216189997504)
    (300000.0, 12.387453304996598)
    (400000.0, 16.966822016998776)
    (500000.0, 21.885099981998792)
    (600000.0, 27.48148662901076)
    (700000.0, 32.59366995001619)
    (800000.0, 38.65000529203098)
    (900000.0, 45.80716944599408)
    (1000000.0, 52.398457770992536)
    (2000000.0, 167.0829270760005)
};
\addlegendentry{$\Delta_{simp} = 60000, \Delta_{free} = 160000, l = 5$}

\addplot coordinates {
    (10000.0, 0.4379752590029966)
    (20000.0, 1.4188520530005917)
    (30000.0, 1.866910570999608)
    (40000.0, 1.8840835459996013)
    (50000.0, 2.183591212000465)
    (60000.0, 2.9813540700124577)
    (70000.0, 2.871423086995492)
    (80000.0, 3.222934413031908)
    (90000.0, 3.5378765469940845)
    (100000.0, 3.884478481981205)
    (200000.0, 8.32312579403515)
    (300000.0, 12.36004210700048)
    (10000.0, 0.4446657669905107)
    (20000.0, 1.4146483310032636)
    (30000.0, 1.8607265850005208)
    (40000.0, 1.89114908198826)
    (50000.0, 2.1550448649795726)
    (60000.0, 2.823271247034427)
    (70000.0, 2.883416313008638)
    (80000.0, 3.0729179280169774)
    (90000.0, 3.524631918000523)
    (100000.0, 3.66546641101013)
    (200000.0, 7.936748792999424)
    (300000.0, 11.775325979979243)
    (400000.0, 16.07262964299298)
    (500000.0, 20.96899864802253)
    (600000.0, 26.264883536001435)
    (700000.0, 31.13123064200045)
    (800000.0, 36.73213443800341)
    (900000.0, 42.756052195996745)
    (1000000.0, 48.56842555501498)
    (2000000.0, 155.56586735098972)
};
\addlegendentry{$\Delta_{simp} = 60000, \Delta_{free} = 160000, l = 10$}

\addplot[
    domain=10000:2000000, 
    samples=100,     
    dashed,          
] {0.000005*x^1.2};
\addlegendentry{$n^{1.2}$}

\end{axis}
\end{tikzpicture}

    \vspace{3mm}
    \subfloat[\centering varying $\Delta_\textit{simp}$.\label{subfig:3}]{\scalebox{0.5}{

\begin{tikzpicture}
\begin{axis}[
    xlabel={Points},
    ylabel={Time in $[s]$},
    legend pos=north west,
    grid=major,
    width=\textwidth,
    ymode=log,
    xmode=log,
    height=0.6\textwidth,
    ymax=1000,
    style={ultra thick},
]
\addplot coordinates {
    (10000.0, 0.2743319664150476)
    (20000.0, 0.6291411239653826)
    (30000.0, 1.156669078860432)
    (40000.0, 1.2013287860900164)
    (50000.0, 1.2719988357275724)
    (60000.0, 1.5111245377920568)
    (70000.0, 1.6127860150299966)
    (80000.0, 1.864451705943793)
    (90000.0, 2.0716149527579546)
    (100000.0, 2.2585983057506382)
    (200000.0, 5.118592428974807)
    (300000.0, 8.476169132161885)
    (400000.0, 12.684599642641842)
    (500000.0, 17.49183826101944)
    (600000.0, 23.824896934442226)
    (700000.0, 31.145028330385685)
    (800000.0, 40.83896321756765)
    (900000.0, 52.7897315020673)
    (1000000.0, 64.28902415791526)
    (2000000.0, 325.2638315558433)
};
\addlegendentry{$\Delta_{simp} = 30000, \Delta_{free} = 160000, l = 3$}

\addplot coordinates {
    (10000.0, 0.445091408008011)
    (20000.0, 1.423659845997463)
    (30000.0, 1.826881044995389)
    (40000.0, 1.8933151230157816)
    (50000.0, 2.1811309380136663)
    (60000.0, 2.834360322987777)
    (70000.0, 2.912917552006548)
    (80000.0, 3.283769465997466)
    (90000.0, 3.694548112995108)
    (100000.0, 3.945654797993484)
    (200000.0, 8.230075030005537)
    (300000.0, 12.39505133999046)
    (400000.0, 17.151918208997813)
    (500000.0, 22.09512110601645)
    (600000.0, 28.270884515994112)
    (700000.0, 33.25838872800523)
    (800000.0, 39.72566475901112)
    (900000.0, 47.33260071399855)
    (1000000.0, 55.22182434500428)
    (2000000.0, 180.5534527479904)
};
\addlegendentry{$\Delta_{simp} = 60000, \Delta_{free} = 160000, l = 3$}

\addplot coordinates {
    (10000.0, 0.8611364999960642)
    (20000.0, 2.226682918990264)
    (30000.0, 2.3011184420029167)
    (40000.0, 2.670405172000756)
    (50000.0, 3.355560283001978)
    (60000.0, 4.254892120996374)
    (70000.0, 4.555174457011162)
    (80000.0, 4.85947609100549)
    (90000.0, 5.67829608800821)
    (100000.0, 5.924230700999033)
    (200000.0, 12.230114259989932)
    (300000.0, 18.662654179992387)
    (400000.0, 24.70066890500312)
    (500000.0, 30.946579371011467)
    (600000.0, 38.05862567599979)
    (700000.0, 44.885458635006216)
    (800000.0, 52.30285250899033)
    (900000.0, 60.72542601398891)
    (1000000.0, 68.88472725599422)
    (2000000.0, 166.99517136599752)
};
\addlegendentry{$\Delta_{simp} = 90000, \Delta_{free} = 160000, l = 3$}

\addplot coordinates {
    (10000.0, 1.1267777110042516)
    (20000.0, 2.5595191930042347)
    (30000.0, 2.6602265930123394)
    (40000.0, 3.312578848999692)
    (50000.0, 4.807987031002995)
    (60000.0, 5.5497717749967705)
    (70000.0, 5.704909836989828)
    (80000.0, 6.6083954379864736)
    (90000.0, 7.484182188010891)
    (100000.0, 8.145490499999141)
    (200000.0, 16.287324280987377)
    (300000.0, 23.940453852992505)
    (400000.0, 31.20774520999112)
    (500000.0, 39.06129344800138)
    (600000.0, 47.698589658000856)
    (700000.0, 56.44028363001416)
    (800000.0, 65.76537355101027)
    (900000.0, 74.50175737100653)
    (1000000.0, 83.44854148602462)
    (2000000.0, 190.71854418901785)
};
\addlegendentry{$\Delta_{simp} = 120000, \Delta_{free} = 160000, l = 3$}

\addplot[
    domain=10000:2000000, 
    samples=100,     
    dashed,          
] {0.000005*x^1.2};
\addlegendentry{$n^{1.2}$}

\end{axis}
\end{tikzpicture}
    \subfloat[\centering varying $\Delta_\textit{free}$.\label{subfig:4}]{\scalebox{0.5}{\begin{tikzpicture}
\begin{axis}[
    xlabel={Points},
    ylabel={Time in $[s]$},
    legend pos=north west,
    grid=major,    
    width=\textwidth,
    ymode=log,
    xmode=log,
    height=0.6\textwidth,
    style={ultra thick},
]
\addplot coordinates {
    (10000.0, 0.27719900989905)
    (20000.0, 0.6316489949822426)
    (30000.0, 1.188493647612631)
    (40000.0, 1.197291468270123)
    (50000.0, 1.2472961102612317)
    (60000.0, 1.5211414876393974)
    (70000.0, 1.621997483074665)
    (80000.0, 1.887173877563328)
    (90000.0, 2.11663400195539)
    (100000.0, 2.303281505592168)
    (200000.0, 5.272964080795646)
    (300000.0, 8.847055089194328)
    (400000.0, 13.526936966460198)
    (500000.0, 18.903752747923136)
    (600000.0, 26.31807895982638)
    (700000.0, 34.53292950661853)
    (800000.0, 43.89596584625542)
    (900000.0, 53.70135052455589)
    (1000000.0, 64.97437348077074)
    (2000000.0, 432.7205221820623)
};
\addlegendentry{$\Delta_{simp} = 30000, \Delta_{free} = 80000, l = 3$}

\addplot coordinates {
    (10000.0, 0.2743319664150476)
    (20000.0, 0.6291411239653826)
    (30000.0, 1.156669078860432)
    (40000.0, 1.2013287860900164)
    (50000.0, 1.2719988357275724)
    (60000.0, 1.5111245377920568)
    (70000.0, 1.6127860150299966)
    (80000.0, 1.864451705943793)
    (90000.0, 2.0716149527579546)
    (100000.0, 2.2585983057506382)
    (200000.0, 5.118592428974807)
    (300000.0, 8.476169132161885)
    (400000.0, 12.684599642641842)
    (500000.0, 17.49183826101944)
    (600000.0, 23.824896934442226)
    (700000.0, 31.145028330385685)
    (800000.0, 40.83896321756765)
    (900000.0, 52.7897315020673)
    (1000000.0, 64.28902415791526)
    (2000000.0, 325.2638315558433)
};
\addlegendentry{$\Delta_{simp} = 30000, \Delta_{free} = 160000, l = 3$}

\addplot coordinates {
    (10000.0, 0.2721338509581983)
    (20000.0, 0.6271378989331424)
    (30000.0, 1.1511705508455634)
    (40000.0, 1.2129436950199306)
    (50000.0, 1.2628406109288337)
    (60000.0, 1.497494414448738)
    (70000.0, 1.6052325130440297)
    (80000.0, 1.873836417216808)
    (90000.0, 2.059698451776057)
    (100000.0, 2.241993742063641)
    (200000.0, 4.945766382850707)
    (300000.0, 8.18013993324712)
    (400000.0, 12.20221052877605)
    (500000.0, 16.729992508888245)
    (10000.0, 0.2775487289763987)
    (20000.0, 0.6293183108791709)
    (30000.0, 1.1546574020758271)
    (40000.0, 1.206835452001542)
    (50000.0, 1.260572717525065)
    (60000.0, 1.509505104739219)
    (70000.0, 1.5941020660102367)
    (80000.0, 1.8642137479037049)
    (90000.0, 2.0735959513112903)
    (100000.0, 2.2317455909214914)
    (200000.0, 4.945315834600478)
    (300000.0, 7.982077595312148)
    (400000.0, 11.868728865869343)
    (500000.0, 16.231448793783784)
    (600000.0, 21.8267435580492)
    (700000.0, 29.92411387711764)
    (800000.0, 38.9079528618604)
    (900000.0, 48.131606252864)
    (1000000.0, 57.04510872485116)
    (2000000.0, 257.08957011718303)
};
\addlegendentry{$\Delta_{simp} = 30000, \Delta_{free} = 240000, l = 3$}

\addplot coordinates {
    (10000.0, 0.2717683410010068)
    (20000.0, 0.6288525609998032)
    (30000.0, 1.1700122149923118)
    (40000.0, 1.2124618840025505)
    (50000.0, 1.259241852996638)
    (60000.0, 1.5159145620127674)
    (70000.0, 1.6005109199904837)
    (80000.0, 1.9243741660029627)
    (90000.0, 2.0987833160033915)
    (100000.0, 2.339744126002188)
    (200000.0, 5.000183508993359)
    (300000.0, 8.264458531004493)
    (400000.0, 12.257520947998271)
    (500000.0, 16.647571672001504)
    (600000.0, 22.71155708801234)
    (700000.0, 29.9072454309935)
    (800000.0, 38.79661530601152)
    (900000.0, 48.92615313499118)
    (1000000.0, 57.31145696398744)
    (2000000.0, 229.38438491499983)
};
\addlegendentry{$\Delta_{simp} = 30000, \Delta_{free} = 320000, l = 3$}

\addplot[
    domain=10000:2000000, 
    samples=100,     
    dashed,          
] {0.000005*x^1.2};
\addlegendentry{$n^{1.2}$}

\end{axis}
\end{tikzpicture}

    \caption{Influence of different combinations of parameters on the running time evaluated on the data set from the NOAA Global Drifter Program~\cite{gdacDataset}.}
    \label{fig:drifters}
\end{figure}

We apply our algorithm to trajectories from the NOAA Global Drifter Program~\cite{gdacDataset}. This is a comprehensive data set consisting of almost $20\,000$ ocean surface drifters that have been released across the ocean as far back as 1979.
For the evaluation, we focus on the subset of trajectories consisting of all drifters recorded in the last year (2022 - 2024). This data set consist of $5500$ different trajectories which consist on average of $500$ points resulting in a total input complexity of $n\geq 2\cdot10^6$. Refer to Figure~\ref{fig:alldrifters} in which the data set and the computed clustering with $\Delta_\textit{simp}=20\km$, $\Delta_\textit{free}=400\km$ and $l=20$ is depicted. 

\subparagraph{Evaluation} We apply our techniques with a range of radii with $\Delta_{\textit{simp}}$ between $5\km$ and $120\km$, $\Delta_{\textit{free}}$ between $10\km$ and $320\km$, as well as a range of complexity bounds with $l$ between $1$ and $10$. Figure \ref{fig:drifters} shows the running times. We observe that the running time appears to be mostly independent of the exact values of $\Delta_\textit{simp}$ and $\Delta_\textit{free}$, and scales favorably in $n$ compared to the theoretical results. In the $n\leq10^5$ regime, it appears to scale near-linear with an observed running time of roughly $O(n^{1.2})$. With increasing $n$ it approaches roughly quadratic complexity ($O(n^2)$) compared to the theoretical running time of $O(kn^3\log^3n)$. 
In addition, we empirically evaluate the approximation ratio of our set cover algorithm using the size of a greedily computed independent set as a lower bound. We observe that for all tested instances the approximation ratio is less than $3$.

\subsection{Full-Body Motion Tracking Data}
\begin{figure}
    \centering
    \includegraphics[width=0.9\textwidth]{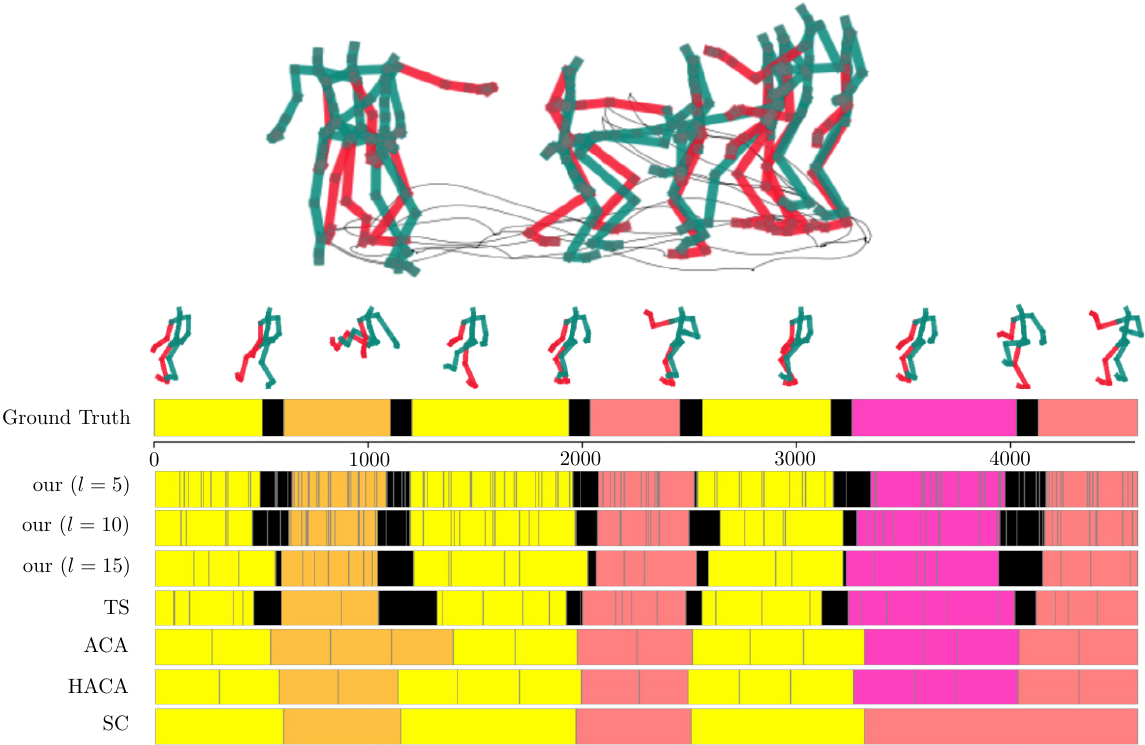}
    \caption{Trial \texttt{01} of subject \texttt{86} with its ground truth labels. Colors correspond to the labels \texttt{walk} (yellow), \texttt{jump} (orange), \texttt{punch} (light red), \texttt{kick} (magenta) and \texttt{transition} (black). Beneath the resulting labeling using our techniques with different values for $l$, temporal segmentation (TS), (hierarchical) aligned cluster algorithm (HACA/ACA) and spectral clustering (SC) \cite{krüger2017Efficient,zhou2012hierarchical,zhou2008aligned}, with gray lines corresponding to the start/end of frequent patterns.}
    \label{cmusub:1}
\end{figure}
\begin{figure}
    \scalebox{0.65}{\boxplot{experiments/data/accuracies.csv}{Accuracy}}
    \scalebox{0.65}{\boxplot{experiments/data/precision.csv}{Macro Precision}}
    \scalebox{0.65}{\boxplot{experiments/data/recall.csv}{Macro Recall}}\\
    \scalebox{0.625}{\drawChartB[log]{Time in $[s]$}{experiments/data/times.csv}[0.1cm][200]}
    \caption{Quantative analysis on trial \texttt{1} to \texttt{14} of subject \texttt{86} from \cite{mocap2007carnegie} and comparison of our techniques to temporal segmentation (TS), (hierarchical) aligned cluster algorithm (HACA/ACA) and spectral clustering (SC) from \cite{krüger2017Efficient,zhou2012hierarchical,zhou2008aligned}.}
    \label{fig:CMUAccuracy}
\end{figure}

Motion Segmentation finds applications in many different fields such as robotics, sports analysis or traffic monitoring~\cite{surveyMotionData}. We apply our techniques to this problem on the CMU data set \cite{mocap2007carnegie}. This data set consists of motion tracking data of $31$ different joint-trackers on different subjects doing sports (trials) ranging through different activities (refer to Figure~\ref{cmusub:1}). We interpret these as trajectories in $93$-dimensional Euclidean space by concatenating the three-dimensional coordinates of all joints back to back to form a pose. Each trial consists of up to $10^4$ poses. We then apply our subtrajectory clustering algorithm with $\Delta_\textit{simp}=0.8$ and $\Delta_\textit{free}\approx1.35$ and a complexity bound $l$ between $5$ and $15$, where the exact parameters have been identified via an exhaustive search to yield the best accuracy for the given complexity bound $l$. The output consists of a set of curves that act as cluster centers. For each of these centers we identify the ground truth label that best corresponds to this center and label all points in its $\Delta_\textit{free}$-Coverage with the identified label. Whenever different labels are assigned to a point along the curve we mark it as a transition between motions. 

\subparagraph{Evaluation} The resulting labeling can be seen in Figure~\ref{cmusub:1}. Observe in particular that an increase in $l$ decreases the number of patterns identified with the total number of labeled segments approaching that of the state-of-the-art, while the accuracy decreases only slightly. We compute the accuracy of the resulting segmentations on ground truth data from \cite{krüger2017Efficient} and compare this accuracy with the accuracy of the temporal segmentation approach (TS) discussed in \cite{krüger2017Efficient} as well as the aligned cluster algorithm (ACA), hierarchical aligned cluster algorithm (HACA) and spectral clustering (SC) discussed in~\cite{zhou2012hierarchical,zhou2008aligned}. The resulting accuracies can be seen in Figure~\ref{fig:CMUAccuracy}. The quantitative accuracy of our techniques compares well to the state-of-the-art techniques, with a (roughly) tenfold improvement in the running time.

\section{Discussion}

We observe that in practice the algorithm is much faster and yields better solutions than what could be expected from the theory. We partially attribute this to the fact that unlike in the worst case analysis the number of non-empty cells in the free-space is less than $n^2$. This reduces the number of extremal points and the complexity of the computed coverage. This suggests that analyses with additional input assumptions such as $c$-packedness \cite{DBLP:journals/dcg/DriemelHW12} or $\lambda$-low-density \cite{DBLP:journals/algorithmica/BergSVK02} could result in a theoretically founded explanation of the observed running time.
We further observe that the extension to non-constant complexity center curves indeed allows the algorithm to capture more interesting behaviour compared to when the center curves are restricted to constant/lower-complexity center curves. 

This provides evidence, that the problem formulation by \cite{Akitaya2021Covering} is practically viable and serves as a versatile tool analyse large amounts of spatio-temporal data.

\paragraph*{Acknowledgements}
This work was partially funded by the Deutsche Forschungsgemeinschaft (DFG, German Research Foundation) - 313421352 (FOR 2535 Anticipating Human Behavior) and the iBehave Network: Sponsored by the Ministry of Culture and Science of the State of North Rhine-Westphalia. The authors are affiliated with Lamarr Institute for Machine Learning and Artificial Intelligence. We thank Frederik Brüning for contributions in early stages of this research. We thank Jürgen Gall, Julian Tanke, Jürgen Kusche, and Bernd Uebbing for useful discussions on the data sets and real world problems. Special thanks to Simon Bartlmae and Paul Jünger for their assistance in conducting the experiments.

\bibliography{mybib}

\begin{thebibliography}{10}

\bibitem{agarwal2018}
Pankaj~K. Agarwal, Kyle Fox, Kamesh Munagala, Abhinandan Nath, Jiangwei Pan, and Erin Taylor.
\newblock Subtrajectory clustering: Models and algorithms.
\newblock In {\em Proceedings of the 37th ACM SIGMOD-SIGACT-SIGAI Symposium on Principles of Database Systems}, PODS '18, page 75–87, 2018.
\newblock \href {https://doi.org/10.1145/3196959.3196972} {\path{doi:10.1145/3196959.3196972}}.

\bibitem{AltG95}
Helmut Alt and Michael Godau.
\newblock Computing the {F}r{\'{e}}chet distance between two polygonal curves.
\newblock {\em Int. J. Comput. Geom. Appl.}, 5:75--91, 1995.
\newblock \href {https://doi.org/10.1142/S0218195995000064} {\path{doi:10.1142/S0218195995000064}}.

\bibitem{Brüning2022Faster}
Frederik Br\"{u}ning, Jacobus Conradi, and Anne Driemel.
\newblock {Faster Approximate Covering of Subcurves Under the Fr\'{e}chet Distance}.
\newblock In {\em 30th Annual European Symposium on Algorithms (ESA 2022)}, volume 244 of {\em Leibniz International Proceedings in Informatics (LIPIcs)}, pages 28:1--28:16, Dagstuhl, Germany, 2022. Schloss Dagstuhl -- Leibniz-Zentrum f{\"u}r Informatik.
\newblock \href {https://doi.org/10.4230/LIPIcs.ESA.2022.28} {\path{doi:10.4230/LIPIcs.ESA.2022.28}}.

\bibitem{Akitaya2021Covering}
Frederik Brüning, Hugo Akitaya, Erin Chambers, and Anne Driemel.
\newblock Subtrajectory clustering: Finding set covers for set systems of subcurves.
\newblock {\em Computing in Geometry and Topology}, 2(1):1:1–1:48, Feb. 2023.
\newblock \href {https://doi.org/10.57717/cgt.v2i1.7} {\path{doi:10.57717/cgt.v2i1.7}}.

\bibitem{buchinGroup2017}
Kevin Buchin, Maike Buchin, David Duran, Brittany~Terese Fasy, Roel Jacobs, Vera Sacristan, Rodrigo~I. Silveira, Frank Staals, and Carola Wenk.
\newblock Clustering trajectories for map construction.
\newblock In {\em Proceedings of the 25th ACM SIGSPATIAL International Conference on Advances in Geographic Information Systems}, SIGSPATIAL '17, 2017.
\newblock \href {https://doi.org/10.1145/3139958.3139964} {\path{doi:10.1145/3139958.3139964}}.

\bibitem{BuchinBGHSSSSSW20}
Kevin Buchin, Maike Buchin, Joachim Gudmundsson, Jorren Hendriks, Erfan~Hosseini Sereshgi, Vera Sacrist{\'{a}}n, Rodrigo~I. Silveira, Jorrick Sleijster, Frank Staals, and Carola Wenk.
\newblock Improved map construction using subtrajectory clustering.
\newblock In {\em LocalRec'20: Proceedings of the 4th {ACM} {SIGSPATIAL} Workshop on Location-Based Recommendations, Geosocial Networks, and Geoadvertising, LocalRec@SIGSPATIAL 2020, November 3, 2020, Seattle, WA, {USA}}, pages 5:1--5:4, 2020.
\newblock \href {https://doi.org/10.1145/3423334.3431451} {\path{doi:10.1145/3423334.3431451}}.

\bibitem{BuchinBGLL11}
Kevin Buchin, Maike Buchin, Joachim Gudmundsson, Maarten L{\"{o}}ffler, and Jun Luo.
\newblock Detecting commuting patterns by clustering subtrajectories.
\newblock {\em International Journal of Computational Geometry and Applications}, 21(3):253--282, 2011.
\newblock \href {https://doi.org/10.1142/S0218195911003652} {\path{doi:10.1142/S0218195911003652}}.

\bibitem{buchinGroup20}
Maike Buchin, Bernhard Kilgus, and Andrea Kölzsch.
\newblock Group diagrams for representing trajectories.
\newblock {\em International Journal of Geographical Information Science}, 34(12):2401--2433, 2020.
\newblock \href {https://doi.org/10.1080/13658816.2019.1684498} {\path{doi:10.1080/13658816.2019.1684498}}.

\bibitem{BuchinW20}
Maike Buchin and Carola Wenk.
\newblock Inferring movement patterns from geometric similarity.
\newblock {\em J. Spatial Inf. Sci.}, 21(1):63--69, 2020.
\newblock \href {https://doi.org/10.5311/JOSIS.2020.21.724} {\path{doi:10.5311/JOSIS.2020.21.724}}.

\bibitem{anonymousGithub}
Jacobus Conradi and Anne Driemel.
\newblock Github repository.
\newblock \url{https://github.com/JacobusTheSecond/clustering}, 2025.

\bibitem{de2013fast}
Mark {de Berg}, Atlas~F. Cook, and Joachim Gudmundsson.
\newblock {F}ast {F}r{é}chet queries.
\newblock {\em Computational Geometry}, 46(6):747--755, 2013.
\newblock \href {https://doi.org/10.1016/j.comgeo.2012.11.006} {\path{doi:10.1016/j.comgeo.2012.11.006}}.

\bibitem{DBLP:journals/algorithmica/BergSVK02}
Mark de~Berg, A.~Frank van~der Stappen, Jules Vleugels, and Matthew~J. Katz.
\newblock Realistic input models for geometric algorithms.
\newblock {\em Algorithmica}, 34(1):81--97, 2002.
\newblock \href {https://doi.org/10.1007/S00453-002-0961-X} {\path{doi:10.1007/S00453-002-0961-X}}.

\bibitem{driemelHW12}
Anne Driemel, Sariel Har{-}Peled, and Carola Wenk.
\newblock Approximating the {F}r{\'{e}}chet distance for realistic curves in near linear time.
\newblock {\em Discrete \& Computational Geometry}, 48(1):94--127, 2012.
\newblock \href {https://doi.org/10.1007/s00454-012-9402-z} {\path{doi:10.1007/s00454-012-9402-z}}.

\bibitem{DBLP:journals/dcg/DriemelHW12}
Anne Driemel, Sariel Har{-}Peled, and Carola Wenk.
\newblock Approximating the fr{\'{e}}chet distance for realistic curves in near linear time.
\newblock {\em Discret. Comput. Geom.}, 48(1):94--127, 2012.
\newblock \href {https://doi.org/10.1007/S00454-012-9402-Z} {\path{doi:10.1007/S00454-012-9402-Z}}.

\bibitem{GudmundssonV15}
Joachim Gudmundsson and Nacho Valladares.
\newblock A {GPU} approach to subtrajectory clustering using the {F}r{\'{e}}chet distance.
\newblock {\em {IEEE} Trans. Parallel Distributed Syst.}, 26(4):924--937, 2015.
\newblock \href {https://doi.org/10.1109/TPDS.2014.2317713} {\path{doi:10.1109/TPDS.2014.2317713}}.

\bibitem{abs-2110-15554}
Joachim Gudmundsson and Sampson Wong.
\newblock Cubic upper and lower bounds for subtrajectory clustering under the continuous {F}réchet distance, 2021.
\newblock \href {https://doi.org/10.48550/ARXIV.2110.15554} {\path{doi:10.48550/ARXIV.2110.15554}}.

\bibitem{krüger2017Efficient}
Björn Krüger, Anna Vögele, Tobias Willig, Angela Yao, Reinhard Klein, and Andreas Weber.
\newblock Efficient unsupervised temporal segmentation of motion data.
\newblock {\em IEEE Transactions on Multimedia}, 19(4):797--812, 2017.
\newblock \href {https://doi.org/10.1109/TMM.2016.2635030} {\path{doi:10.1109/TMM.2016.2635030}}.

\bibitem{kubelka2022animal}
Vojt{\v{e}}ch Kubelka, Brett~K Sandercock, Tam{\'a}s Sz{\'e}kely, and Robert~P Freckleton.
\newblock Animal migration to northern latitudes: environmental changes and increasing threats.
\newblock {\em Trends in ecology \& evolution}, 37(1):30--41, 2022.

\bibitem{LeeHW07}
Jae{-}Gil Lee, Jiawei Han, and Kyu{-}Young Whang.
\newblock Trajectory clustering: a partition-and-group framework.
\newblock In {\em Proceedings of the {ACM} {SIGMOD} International Conference on Management of Data, Beijing, China, June 12-14, 2007}, pages 593--604, 2007.
\newblock \href {https://doi.org/10.1145/1247480.1247546} {\path{doi:10.1145/1247480.1247546}}.

\bibitem{liang2024sub}
Anqi Liang, Bin Yao, Bo~Wang, Yinpei Liu, Zhida Chen, Jiong Xie, and Feifei Li.
\newblock Sub-trajectory clustering with deep reinforcement learning.
\newblock {\em The VLDB Journal}, pages 1--18, 2024.

\bibitem{gdacDataset}
Rick Lumpkin and Luca Centurioni.
\newblock Global drifter program quality-controlled 6-hour interpolated data from ocean surface drifting buoys.
\newblock NOAA National Centers for Environmental Information, 2019.
\newblock \href {https://doi.org/10.25921/7ntx-z961} {\path{doi:10.25921/7ntx-z961}}.

\bibitem{surveyMotionData}
Jana Mattheus, Hans Grobler, and Adnan~M. Abu-Mahfouz.
\newblock A review of motion segmentation: Approaches and major challenges.
\newblock In {\em 2020 2nd International Multidisciplinary Information Technology and Engineering Conference (IMITEC)}, pages 1--8, 2020.
\newblock \href {https://doi.org/10.1109/IMITEC50163.2020.9334076} {\path{doi:10.1109/IMITEC50163.2020.9334076}}.

\bibitem{mocap2007carnegie}
C~MoCap.
\newblock Carnegie mellon university graphics lab motion capture database, 2007.
\newblock URL: \url{http://mocap.cs.cmu.edu/}.

\bibitem{slavik1996tight}
Petr Slav{\'\i}k.
\newblock A tight analysis of the greedy algorithm for set cover.
\newblock In {\em Proceedings of the twenty-eighth annual ACM symposium on Theory of computing}, pages 435--441, 1996.

\bibitem{tabe2019climate}
Tammy Tabe.
\newblock Climate change migration and displacement: Learning from past relocations in the pacific.
\newblock {\em Social Sciences}, 8(7):218, 2019.

\bibitem{vanderhoog2024fasterdeterministicsubtrajectoryclustering}
Ivor van~der Hoog, Thijs van~der Horst, and Tim Ophelders.
\newblock Faster and deterministic subtrajectory clustering, 2024.
\newblock \href {https://arxiv.org/abs/2402.13117} {\path{arXiv:2402.13117}}.

\bibitem{wang2021survey}
Sheng Wang, Zhifeng Bao, J~Shane Culpepper, and Gao Cong.
\newblock A survey on trajectory data management, analytics, and learning.
\newblock {\em ACM Computing Surveys (CSUR)}, 54(2):1--36, 2021.
\newblock \href {https://doi.org/10.1145/3440207} {\path{doi:10.1145/3440207}}.

\bibitem{wilson2016climate}
Laura~J Wilson, Christopher~J Fulton, Andrew~McC Hogg, Karen~E Joyce, Ben~TM Radford, and Ceridwen~I Fraser.
\newblock Climate-driven changes to ocean circulation and their inferred impacts on marine dispersal patterns.
\newblock {\em Global ecology and biogeography}, 25(8):923--939, 2016.

\bibitem{yuan2017review}
Guan Yuan, Penghui Sun, Jie Zhao, Daxing Li, and Canwei Wang.
\newblock A review of moving object trajectory clustering algorithms.
\newblock {\em Artificial Intelligence Review}, 47(1):123--144, 2017.
\newblock \href {https://doi.org/10.1007/s10462-016-9477-7} {\path{doi:10.1007/s10462-016-9477-7}}.

\bibitem{zhou2008aligned}
Feng Zhou, Fernando De~la Torre, and Jessica~K. Hodgins.
\newblock Aligned cluster analysis for temporal segmentation of human motion.
\newblock In {\em 2008 8th IEEE International Conference on Automatic Face \& Gesture Recognition}, pages 1--7, 2008.
\newblock \href {https://doi.org/10.1109/AFGR.2008.4813468} {\path{doi:10.1109/AFGR.2008.4813468}}.

\bibitem{zhou2012hierarchical}
Feng Zhou, Fernando De~la Torre, and Jessica~K Hodgins.
\newblock Hierarchical aligned cluster analysis for temporal clustering of human motion.
\newblock {\em IEEE Transactions on Pattern Analysis and Machine Intelligence}, 35(3):582--596, 2012.

\end{thebibliography}
\bibliographystyle{plainurl}

\end{document}